\documentclass[conference]{IEEEtran}
\IEEEoverridecommandlockouts
\usepackage{cite}
\usepackage{amsmath,amssymb,amsfonts}
\usepackage{algorithmic}
\usepackage{graphicx}
\usepackage{textcomp}
\usepackage{xcolor}
\def\BibTeX{{\rm B\kern-.05em{\sc i\kern-.025em b}\kern-.08em
    T\kern-.1667em\lower.7ex\hbox{E}\kern-.125emX}}

\usepackage{amssymb}
\usepackage{amsmath}
\usepackage{mathrsfs}
\usepackage{amsthm}
\usepackage{verbatim}
\usepackage{dsfont}

\usepackage{caption}

\newtheorem{theorem}{Theorem}
\newtheorem{lemma}{Lemma}

\theoremstyle{definition}
\newtheorem{remark}{Remark}
\newtheorem{definition}{Definition}
\newtheorem{example}{Example}

\usepackage{algorithm}
\usepackage{bbm}
\usepackage{hyperref}
\hypersetup{colorlinks = true
            linkcolor=red,
            anchorcolor=blue,
            citecolor=green}
              
\newcommand{\MF}{\mathcal{F}}
\newcommand{\MI}{\mathcal{I}}
\newcommand{\mM}{\mathcal{M}}
\newcommand{\FF}{\mathbb{F}}
\newcommand{\RR}{\mathbb{R}}
\newcommand{\NN}{\mathbb{N}}
\newcommand{\Ba}{\mathds{1}}

\begin{document}

\title{The Complete SC-Invariant Affine Automorphisms of Polar Codes}

 \author{
   \IEEEauthorblockN{Zicheng Ye\IEEEauthorrefmark{2}\IEEEauthorrefmark{3}, 
   Yuan Li\IEEEauthorrefmark{2}\IEEEauthorrefmark{3}\IEEEauthorrefmark{1}, 
   Huazi Zhang\IEEEauthorrefmark{1}, 
   Rong Li\IEEEauthorrefmark{1}, 
   Jun Wang\IEEEauthorrefmark{1}, 
   Guiying Yan\IEEEauthorrefmark{2}\IEEEauthorrefmark{3}, 
   and Zhiming Ma\IEEEauthorrefmark{2}\IEEEauthorrefmark{3} }
  \IEEEauthorblockA{\IEEEauthorrefmark{2}
                     University of Chinese Academy of Sciences}
   \IEEEauthorblockA{\IEEEauthorrefmark{3}
                     Academy of Mathematics and Systems Science, CAS }
  \IEEEauthorblockA{\IEEEauthorrefmark{1}
                     Huawei Technologies Co. Ltd.}
    Email: \{yezicheng, liyuan2018\}@amss.ac.cn, \{zhanghuazi, lirongone.li, justin.wangjun\}@huawei.com,\\
           yangy@amss.ac.cn, mazm@amt.ac.cn  }

\maketitle

\begin{abstract}
Automorphism ensemble (AE) decoding for polar codes was proposed by decoding permuted codewords with successive cancellation (SC) decoders in parallel and hence has lower latency compared to that of successive cancellation list (SCL) decoding. However, some automorphisms are SC-invariant, thus are redundant in AE decoding. In this paper, we find a necessary and sufficient condition related to the block lower-triangular structure of transformation matrices to identify SC-invariant automorphisms. Furthermore, we provide an algorithm to determine the complete SC-invariant affine automorphisms under a specific polar code construction.
\end{abstract}

\section{Introduction}
Polar codes \cite{b1} are proved to asymptotically achieve capacity on discrete binary memoryless symmetric (BMS) channels under SC decoding. To enhance the finite-length performance, SCL decoding was proposed in \cite{b2}. Moreover, cyclic redundancy check (CRC)-aided polar codes \cite{b3} achieve outstanding performance at short to moderate block lengths.

A substantial part of SCL decoding complexity and latency is related to path management, i.e., sorting and pruning paths according to path metric (PM). In order to reduce the latency, decoding under stage permutations on the factor graph was proposed in \cite{b4}. In \cite{b5}, AE decoding utilized more SC-variant automorphisms instead of only stage permutations to enhance error correcting performance. A key step in AE decoding is the identification and avoidance of SC-invariant automorphisms, which produce duplicate decoding results. The automorphisms formed by lower-triangular affine (LTA) transformations \cite{b6} were proved to be SC-invariant \cite{b5}. In \cite{b7} and \cite{b8}, the block lower-triangular affine (BLTA) group was proved to be the complete affine automorphism group of polar codes. BLTA transformations showed better performance under AE decoding \cite{b7} \cite{b9} \cite{b10}. In \cite{b10}, affine automorphism group was classified into equivalent classes, where each class will yield the same SC decoding result. Therefore selecting at most one automorphism from each equivalent class guarantees SC-variance. In contrast of AE decoding, some other applications require SC-invariant automorphisms. In \cite{b11}, $\frac{n}{4}$-cyclic shift permutations, which are SC-invariant, were proposed for implicit timing indication in Physical Broadcasting Channel (PBCH). In both applications, identifying SC-invariant automorphisms is a key step.

Some previous works attempt to identify SC-invariant affine automorphisms for general polar codes \cite{b5} \cite{b10}. However, given a specific code construction, SC-invariant automorphisms can not be completely found in \cite{b5}, \cite{b10}. In this paper, we identify and prove the complete SC-invariant affine automorphisms. For example, as shown in Table \ref{tab1} of section IV, the number of the complete SC-invariant affine automorphisms for (256,128) polar code is $21\times 2^{28}$ but only $3\times 2^{28}$ of them are founded in \cite{b10}.

The rest of this paper is organized as follows. In section II, we review polar codes and automorphism group. In section III, we provide a low complexity algorithm to distinguish SC-invariant affine automorphisms.  We further prove SC-invariant affine automorphism group is also of the form BLTA and provide an algorithm to determine it. In section IV, simulation results show distinguishing SC-invariant automorphisms can reduce redundancy in AE decoding. Finally, we draw some conclusions in section V.

\section{Preliminaries}

\subsection{Polar codes as monomial codes}
Let $F=\begin{bmatrix} 1&0 \\ 1&1 \end{bmatrix}$ and $G_m=F^{\otimes m}$, where $m$ is the code dimension. A polar code $(n=2^m,K)$ is generated by selecting $K$ rows of $G_m$. The set $\MI\subseteq \{0,1,...,n-1\}$ of indices of selected rows is the information set, and $\MF = \MI^c$ is the frozen set. Denote the polar code with information set $\MI$ by $C(\MI)$. 

Polar codes can be described as monomial codes \cite{b6}. The monomial set is
\[\mM=\{x_1^{g_1}...x_m^{g_m}|(g_1,...,g_m)^T\in\FF_2^m\},\]
and the evaluation of $g\in \mM$ is
\[\text{eval}(g) = (g(u))_{u\in\FF_2^m}.\]

Then each row of $G_m$ can be represented by $\text{eval}(g)$ for some $g\in\mM$. For example, assume $z\in \{0,1,...,2^m-1\}$, there is a unique binary representation $a=(a_1,...,a_m)^T$ of $2^m-z-1$, where $a_1$ is the least significant bit, such that
\[\sum_{i=1}^m 2^{i-1}(1-a_i) = z.\]
Then the evaluation of monomial $\text{eval}(x_1^{a_1}...x_m^{a_m})$ is exactly the $(2^m-z-1)$-th row of $G_m$. Therefore, the information set $\MI$ can be regarded as a subset of $\{0,...,n-1\}$ or a subset of $\mM$. As seen, the three representations, i.e., the number $z$, the binary representation of $2^m-1-z = (a_1,...,a_m)^T$ and the corresponding monomial $x_1^{a_1}...x_m^{a_m}$ all refer to the same thing.
 
Two monomials of the same degree are ordered as $x_{i_1}...x_{i_t}\preccurlyeq x_{j_1}...x_{j_t}$ if and only if $i_l \leq j_l$ for all $l\in\{1,...,t\}$, where we assume $i_1 <...< i_t$ and $j_1 <...< j_t$. This partial order is extended to monomials with different degrees through divisibility, namely $f \preccurlyeq g$ if and only if there is a divisor $g'$ of $g$ such that $f\preccurlyeq g'$.

An information set $\MI\subseteq \mathcal{M}$ is decreasing if $\forall g\preccurlyeq f$ and $f\in\MI$ we have $g \in \MI$. A decreasing monomial code $C(\MI)$ is a monomial code with a decreasing information set $\MI$. If the information set is selected according to the Bhatacharryya parameter, polar codes will be decreasing monomial codes \cite{b6}, \cite{b13}. In this way, polar codes can be generated by $\MI_{\text{min}}$, where the information set is the smallest decreasing set containing $\MI_{\text{min}}$. From now on, we always suppose $\MI$ is decreasing.

\subsection{Affine automorphism group}

Let $C$ be a decreasing monomial code with length $n$. A permutation $\pi$ in the symmetric group Sym$(n)$ is an automorphism of $C$ if for any codeword $c=(c_0,...,c_{n-1})\in C$, $\pi(c)=(c_{\pi(0)},...,c_{\pi(n-1)})\in C$. The automorphism group Aut$(C)$ is the subgroup of  Sym$(n)$ containing all automorphisms of $C$. 

Let $M$ be an $m\times m$ binary invertible matrix and $b$ be a length-$m$ binary column vector. The affine transformation $(M,b)$ permutes $a\in \FF_2^m$ to $Ma+b$. 

A matrix $M$ is lower-triangular if $M(i,i)=1$ and $M(i,j)=0$ for all $j>i$. The LTA group is the group of all affine transformations $(M,b)$ where $M$ is lower-triangular. Similarly, a matrix $M$ is upper-triangular if $M(i,i)=1$ and $M(i,j)=0$ for all $j<i$.

BLTA$([s_1,...,s_l])$ is a BLTA group of all affine transformations $(M,b)$ where $M$ can be written as a block matrix of the following form
\begin{equation}
\begin{bmatrix}
B_{1,1} & 0 & \cdots & 0 \\
B_{2,1} & B_{2,2} & \cdots & 0 \\
\vdots & \vdots  & \ddots & 0 \\
B_{l,1} & B_{l,2} & \cdots & B_{l,l}
\end{bmatrix}, \label{eqma}
\end{equation}
where $B_{i,i}$ are full-rank $s_i\times s_i$ matrices. BLTA equals the complete automorphisms of decreasing polar codes that can be formulated as affine transformations \cite{b8}.

\subsection{Successive cancellation decoding} \label{sec2C}
Let $L_{i,t}$ be the log likelihood ratio (LLR) of the $i$-th node at stage $t$ and $L_{i,m}$ be the received LLRs from channels. $L_{i,t}$ are propagated from stage $t+1$ according to
\[L_{i,t} = f(L_{i,t+1},L_{i+2^t,t+1}) = \log\left(\frac{e^{L_{i,t+1}+L_{i+2^t,t+1}}+1}{e^{L_{i,t+1}}+e^{L_{i+2^t,t+1}}}\right); \]
\begin{align*}
L_{i+2^t,t} &= g(u_{i,t},L_{i,t+1},L_{i+2^t,t+1})\\
&= (-1)^{u_{i,t}}L_{i,t+1}+L_{i+2^t,t+1}. 
\end{align*}

At stage $0$, we have
\begin{gather*}
u_{i,0} = \begin{cases}
    0 ,& \text{ if } i\in \MF; \\
    0 ,& \text{ if } i\in \MI, L_{i,0}\geq 0; \\
    1 ,& \text{ if } i\in \MI, L_{i,0}< 0. \\
\end{cases}
\end{gather*}

Then hard decisions $u_{i,t}$ are propagated from stage $t-1$ according to
\[u_{i,t} = u_{i,t-1}\oplus u_{i+2^t,t-1};\]
\[u_{i+2^t,t} = u_{i+2^t,t-1}.\]
where $\oplus$ means the addition modulo $2$.

Let $\text{SC}_{\MI}:\RR^n\to\FF_2^n$ map the received LLR vector $y = (L_{i,m})_{i\in\{0,...,n-1\}}\in\RR^n$ to the SC decoding result $ (u_{i,m})_{i\in\{0,...,n-1\}}=\text{SC}_{\MI}(y)\in C(\MI)$.

\subsection{Automorphism ensemble decoding}

Let $\pi_1,...,\pi_t$ be $t$ different automorphisms of the code $C$ and $y\in\RR^n$ be the received LLR vector. A list of decoders can independently decode each permuted LLR $\pi_j(y)$. The decoded candidate codeword of $y$ using $\pi_j$ is
\[\hat{x}_j = \pi_j^{-1}(\text{SC}_{\MI}(\pi_j(y)).\]
A final decoding result is selected according to the minimum Euclidean distance rule:
\[x = \arg\min_{\hat{x}_j,j=1,...,t}||\hat{x}_j-y||.\]
For an automorphism $\pi$ of $C(\MI)$,  we say $\pi$ commutes with $\text{SC}_{\MI}$ if for all $y\in\RR^n$, $\text{SC}_{\MI}(\pi(y))= \pi(\text{SC}_{\MI}(y))$. If $\pi$ commutes with $\text{SC}_{\MI}$, the corresponding permuted SC decoder always outputs the same decoding result as the non-permuted SC decoder.

The automorphism $\pi$ in LTA group is SC-invariant for $C(\MI)$, which means it commutes with $\text{SC}_{\MI}$ \cite{b5}. Moreover, in \cite{b10}, two affine automorphisms $\pi,\pi'$ are called equivalent for $C(\MI)$, denoted by $\pi\sim_{\MI}\pi'$, if for all $y\in\RR^n$
\[\pi^{-1}(\text{SC}_{\MI}(\pi(y))) = \pi'^{-1}\text{SC}_{\MI}(\pi'(y)).\]
The equivalence classes are defined as
\[[\pi]_{\MI} =\{\pi' : \pi'\sim_{\MI}\pi\}.\] 
Let $\Ba$ be identity permutation, the equivalence class $[\Ba]_{\MI}$ consists of the complete affine automorphisms commuting with $\text{SC}_{\MI}$, which is an automorphism subgroup. The authors of \cite{b10} proved that BLTA$([2,1...,1])\subseteq [\Ba]_{\MI}$, that is, automorphisms in BLTA$([2,1...,1])$ commute with $\text{SC}_{\MI}$ of any decreasing monomial code $C(\MI)$ whose automorphism group includes BLTA$([2,1...,1])$. A natural question arises: are these the complete SC-invariant affine automorphisms?

\section{Analysis on SC-invariant automorphisms}

In this section, we give a necessary and sufficient condition to identify SC-invariant affine automorphisms for any specific code $C(\MI)$. The key technique in proofs is that the block lower-triangular structure of transformation matrix can be used to decompose the corresponding automorphism into shorter ones (detailed description is in Definition \ref{def1}). Therefore, $C(\MI)$ can be decomposed to shorter subcodes, and we can identify SC-invariant automorphisms inductively.

\subsection{Notations and definitions}

Let $M$ be a full-rank matrix. We say $M$ has the block lower-triangular structure $s(M)=\langle s_1,...,s_l \rangle$ if $M$ can be written as (\ref{eqma}) and none of $B_{i,i}$ can be written as a block lower-triangular matrix with more than one block. Define $S_t=\sum_{i=1}^{t-1} s_i$ for $2\leq t\leq l+1$ and $S_1=0$.

Define $[a,b]$ to be the integer set $\{i\in \NN|a\leq i\leq b\}$ for $a,b\in\NN$, and $M([a,b],[c,d])$ to be the corresponding submatrix of $M$.

The affine transformation $(M,0)\subseteq [\Ba]_{\MI}$ if and only if $(M,b)\subseteq [\Ba]_{\MI}$ for any $b\in\FF_2^m$. For convenience, define $\varphi(M)$ to be the permutation $(M,0)$ which permutes $a\in\FF_2^m$ to $Ma$.

Define $\text{Ind}_m  (a_{i_1}=c_{i_1},...,a_{i_t}=c_{i_t}) 
= \{(a_1,...,a_m)^T\in\FF_2^m|a_{i_j}=c_{i_j}, \forall j=1,...,t\}$
to be a set of indices whose $i_1,...,i_t$-th bits are fixed to be $a_{i_1},...,a_{i_t}$. Define
\begin{align*}
&\MI(\text{Ind}_m(a_{i_1}=c_{i_1},...,a_{i_t}=c_{i_t})) \\ 
=&\{(a_1,...a_{i_1-1},a_{i_1+1},...,a_{i_t-1},a_{i_t+1},...,a_m)^T\in\FF_2^{m-t}|\\
&(a_1,...,a_m)^T\in \MI, a_{i_j}=c_{i_j}, \forall j=1,...,t\}
\end{align*}
to be the information set of length-$2^{m-t}$ subcode consisting of indices in  $\text{Ind}_m(a_{i_1}=c_{i_1},...,a_{i_t}=c_{i_t})$. $\MF(\text{Ind}_m(a_{i_1}=c_{i_1},...,a_{i_t}=c_{i_t}))$ is defined in the same way.

\begin{figure}[!t]
\centering
\includegraphics[width=3in]{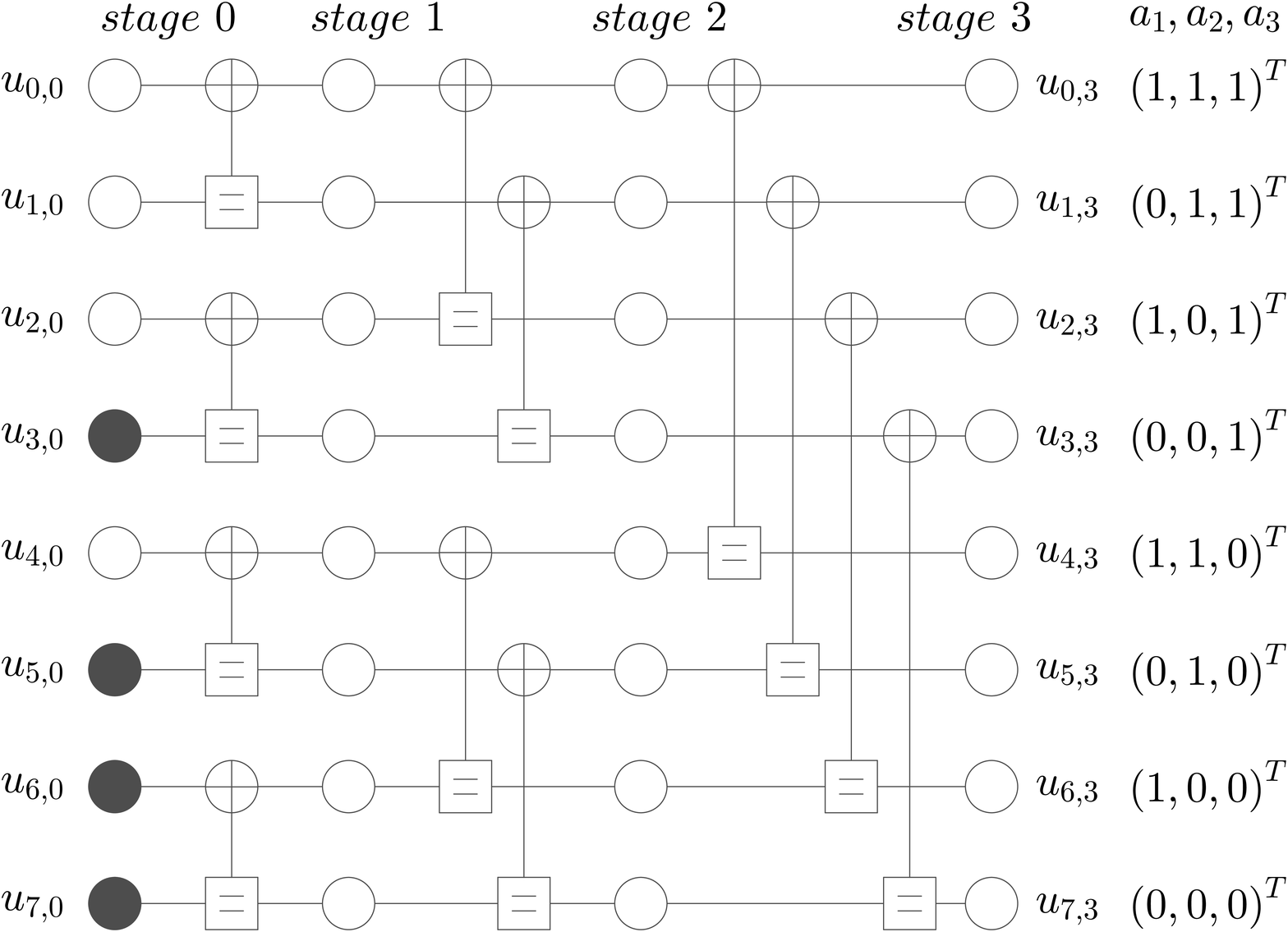}
\caption{The factor graph of an (8,4) polar code}
\label{fig2}
\end{figure}

For example, the factor graph of (8,4) polar code is shown in Fig. \ref{fig2}, where $\MI = \{3, 5, 6, 7\}$. In Fig. \ref{fig2}, $\MI(\text{Ind}_3(a_1=1)) = \{3\}$ (resp. $\MI(\text{Ind}_3(a_1=0))= \{1,2,3\}$) is the information set of length-4 subcode consisting of indices that the least significant bit $a_1$ is equal to $1$ (resp. $0$), i.e. the even (resp. odd) bits. Similarly, $\MI(\text{Ind}_3(a_3=1)) = \{3\}$ (resp. $\MI(\text{Ind}_3(a_3=0))= \{1,2,3\}$) is the information set of length-4 subcode consisting of indices that the most significant bit $a_3$ is equal to $1$ (resp. $0$), i.e. the first (resp. last) four bits. This definition simplifies the description of decomposed shorter subcodes in our proofs.

\subsection{Transformations of matrices}

In this subsection, we provide two lemmas on matrix transformations. 
 
\begin{lemma}[\textbf{lower-triangular transformation}]\label{lemma1}
Let $M$ be a full-rank matrix and $M_1, M_2$ be two lower-triangular matrices. $\pi = \varphi(M)$ and $\pi' = \varphi(M_1MM_2)$ are two automorphisms of $C(\MI)$. Then $\pi'$ commutes with $\text{SC}_{\MI}$ if and only if $\pi$ commutes with $\text{SC}_{\MI}$. Moreover, $s(M)= s(M_1MM_2)$.
\end{lemma}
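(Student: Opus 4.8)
The statement has two parts, which I would treat separately; the plan for the first is to reduce it to the known fact \cite{b5} that every automorphism in the LTA group commutes with $\text{SC}_{\MI}$.

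For the equivalence of the two commuting conditions: since $M_1$ and $M_2$ are lower-triangular, they are invertible with lower-triangular inverses, so each of $\varphi(M_1),\varphi(M_2),\varphi(M_1^{-1}),\varphi(M_2^{-1})$ lies in the LTA group and hence commutes with $\text{SC}_{\MI}$ by \cite{b5}. Because $\varphi$ turns a matrix product into the composition of the corresponding permutations (in one order or the other, depending on the convention for how permutations act on coordinates), $\pi'=\varphi(M_1MM_2)$ is $\pi$ composed with an LTA permutation on each side, say $\pi'=\varphi(M_2)\circ\pi\circ\varphi(M_1)$. Then for any $y\in\RR^n$, using that $\varphi(M_2)$ commutes with $\text{SC}_{\MI}$,
\[ \text{SC}_{\MI}(\pi'(y)) = \varphi(M_2)(\text{SC}_{\MI}(\pi(\varphi(M_1)(y)))), \]
and if $\pi$ commutes with $\text{SC}_{\MI}$ one pushes $\text{SC}_{\MI}$ through $\pi$ and then through $\varphi(M_1)$ to reach $\pi'(\text{SC}_{\MI}(y))$. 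The converse is the same computation applied to $\pi=\varphi(M_1^{-1})\circ\pi'\circ\varphi(M_2^{-1})$, which again flanks $\pi'$ by LTA permutations. Hence $\pi'$ commutes with $\text{SC}_{\MI}$ if and only if $\pi$ does.

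For the equality $s(M)=s(M_1MM_2)$ I would first set up a breakpoint description of the canonical block structure. Call $k\in\{1,\dots,m-1\}$ \emph{admissible} for a full-rank matrix $N$ if the upper-right submatrix $N([1,k],[k+1,m])$ is zero, and let $\mathcal{B}(N)$ be the set of admissible $k$. A partition of $\{1,\dots,m\}$ into consecutive intervals puts $N$ into the block form (\ref{eqma}) exactly when all of its internal split points are admissible; taking \emph{all} admissible split points therefore gives the finest such partition, and one checks (by expanding the relevant upper-right submatrix into sub-blocks that already vanish) that its diagonal blocks are themselves indecomposable, so this is the partition recorded by $s(N)$. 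Thus $s(N)$ is determined by $\mathcal{B}(N)$, and it suffices to prove $\mathcal{B}(M)=\mathcal{B}(M_1MM_2)$. This rests on two observations: every $k$ is admissible for a lower-triangular matrix (its upper-right block is automatically zero), so $\mathcal{B}(M_1)=\mathcal{B}(M_2)=\mathcal{B}(M_1^{-1})=\mathcal{B}(M_2^{-1})=\{1,\dots,m-1\}$; and $\mathcal{B}(NN')\supseteq\mathcal{B}(N)\cap\mathcal{B}(N')$, since if $k$ is admissible for both factors then, split at $k$, both are $2\times2$ block lower-triangular and so is their product. Combining these (extended to three factors by associativity) gives $\mathcal{B}(M_1MM_2)\supseteq\mathcal{B}(M)$, and the same inequality applied to $M=M_1^{-1}(M_1MM_2)M_2^{-1}$ gives the reverse inclusion; hence $\mathcal{B}(M)=\mathcal{B}(M_1MM_2)$ and $s(M)=s(M_1MM_2)$.

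The coordinate bookkeeping in the first half and the $2\times2$ block multiplications in the second half are routine; the one step that takes a little care is the breakpoint characterization of $s(\cdot)$ — specifically, verifying that unioning all admissible split points yields a block decomposition whose diagonal blocks cannot be split further, so that $\mathcal{B}(N)$ genuinely encodes the canonical structure $s(N)$. Once that is established, the rest is mechanical.
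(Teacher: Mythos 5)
Your proof is correct and follows essentially the same route as the paper: the first part flanks $\pi$ by LTA permutations that commute with $\text{SC}_{\MI}$ (using that lower-triangular matrices have lower-triangular inverses), and the second part shows block-lower-triangular splits are preserved under multiplication by lower-triangular matrices and then reverses the inclusion via $M=M_1^{-1}(M_1MM_2)M_2^{-1}$. Your admissible-breakpoint formalism is just a more systematic packaging of the paper's block-by-block ``and so on'' argument, and the indecomposability check you flag does go through.
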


\begin{proof}
It is from the fact that $\varphi(M_1MM_2) = \varphi(M_1)\varphi(M)\varphi(M_2)$ and $\varphi(M_1),\varphi(M_2)$ and $\varphi(M)$ all commute with $\text{SC}_{\MI}$.

Let $s(M) = \langle s_1,...,s_l\rangle$ and $s(M_1MM_2) = \langle s'_1,...,s'_k\rangle$. Notice that $L_1M$ means adding upper row of $M$ to lower, while $ML_2$ means adding right column of $M$ to left. Therefore, $M([1,s_1],[s_1+1,m]) = 0$ implies $M_1MM_2([1,s_1],[s_1+1,m]) = 0$, so $s'_1\leq s_1$. 

Since $M$ = $M_1^{-1}(M_1MM_2)M_2^{-1}$, similarly, we have $s_1\leq s'_1$. Therefore, $s_1=s'_1$. And so on, $s(M)= s(M_1MM_2)$.
\end{proof}

\begin{remark}\label{re1}
Thanks to Lemma \ref{lemma1}, we only need to investigate upper-triangular matrices since every matrix $M$ can be transformed to an upper-triangular matrix by lower-triangular transformation while maintaining the block lower-triangular structure. 
\end{remark}

Next, we show how to decompose upper-triangular transformation by exploiting its block lower-triangular structure.

\begin{definition}\label{def1}
Let $M$ be an upper-triangular matrix with $s(M)=\langle s_1,...,s_l\rangle$ and $\pi= \varphi(M)$. We have $M=M_1M_2$ where 
\begin{gather*}
M_1(i,j) = \begin{cases}
    M(i,j) ,& \text{ if } 1\leq i,j\leq S_l; \\
    1 ,& \text{ if } S_l+1\leq i =j\leq m; \\
    0 ,& \text{ otherwise }. \\
\end{cases}
\end{gather*}
And
\begin{gather*}
M_2(i,j) = \begin{cases}
    M(i,j) ,& \text{ if }  S_l+1\leq i,j\leq m; \\
    1 ,& \text{ if } 1\leq i=j\leq S_l; \\
    0 ,& \text{ otherwise }. \\
\end{cases}
\end{gather*}
We define four permutations related to $\pi$: $\pi_1 = \varphi(M_1), \pi_2 = \varphi(M_2), \tilde{\pi}_1 = \varphi(M([1,S_l],[1,S_l])),\tilde{\pi}_2 = \varphi(M([S_l+1,m],[S_l+1,m]))$.
\end{definition}

\begin{lemma}[\textbf{Permutation Decomposition}]\label{lemma2}
Let $M$ be an upper-triangular matrix with $s(M)=\langle s_1,...,s_l\rangle$ and $\pi= \varphi(M)$. Let $\pi_1,\pi_2,\tilde{\pi}_1,\tilde{\pi}_2$ be the permutations defined in Definition \ref{def1}. For $0\leq z_1\leq 2^{S_l}-1$ and $0\leq z_2\leq 2^{s_l}-1$,
\begin{equation}
\pi(z_1+2^{S_l}z_2) = \pi_1(z_1)+\pi_2 (2^{S_l}z_2), \label{eq21}
\end{equation}
\begin{equation}
\pi(z_1+2^{S_l}z_2) = \tilde{\pi}_1(z_1)+2^{S_l}\tilde{\pi}_2(z_2). \label{eq22}
\end{equation}

Moreover, if $s_l=1$, then $\pi=\pi_1$ and $\pi_2$ is the identical permutation, which means
\begin{equation}
\pi(z_1+2^{m-1}z_2) = \pi(z_1)+2^{m-1}z_2 \label{eq1}
\end{equation}
for $0\leq z_1\leq 2^{m-1}-1$ and $z_2=0,1$. And (\ref{eq1}) means $\pi([0,n/2-1])=[0,n/2-1]$ and $\pi([n/2,n-1])=[n/2,n-1]$, that is, bits in the upper (lower) half branch remain in the upper (lower) half branch after permutation.
\end{lemma}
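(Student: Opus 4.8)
The plan is to prove the two displayed identities \eqref{eq21} and \eqref{eq22} directly from the factorization $M = M_1 M_2$ of Definition \ref{def1}, and then specialize to the case $s_l = 1$. Throughout I would work with the binary-vector representation of indices: recall that an integer $z$ with $0 \le z \le 2^m-1$ corresponds to the binary vector $a = (a_1,\dots,a_m)^T$ of $2^m - 1 - z$, and that $\varphi(M)$ acts on $z$ by sending this vector $a$ to $Ma$. The crucial bookkeeping fact is that writing $z = z_1 + 2^{S_l} z_2$ with $0 \le z_1 \le 2^{S_l}-1$ corresponds, on the level of complemented binary vectors, to splitting $a$ into its first $S_l$ coordinates (determined by $z_1$) and its last $s_l$ coordinates (determined by $z_2$), because $2^m - 1 - z = (2^{S_l} - 1 - z_1) + 2^{S_l}(2^{s_l} - 1 - z_2)$. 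I would state this splitting lemma cleanly first, since both equations rely on it.

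For \eqref{eq21}: write $a = \binom{a'}{a''}$ with $a' \in \FF_2^{S_l}$ encoding $z_1$ and $a'' \in \FF_2^{s_l}$ encoding $z_2$. By Definition \ref{def1}, $M_1$ acts as $M([1,S_l],[1,S_l])$ on the top block and as the identity on the bottom block, while $M_2$ acts as the identity on the top block and as $M([S_l+1,m],[S_l+1,m])$ on the bottom block (here I use that $M$ is upper-triangular, so the block $M([1,S_l],[S_l+1,m])$ is the only off-diagonal block and it sits inside $M_1$). A direct computation of $M_1 M_2 a$ versus $M_1 a + M_2 a$ — being careful that the two summands $\pi_1(z_1)$ and $\pi_2(2^{S_l} z_2)$ have disjoint "supports" in the two coordinate blocks so that integer addition matches coordinatewise combination — gives the identity. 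I would phrase this as: $\pi_1$ fixes the bottom block to its all-ones complement (i.e. contributes $0$ to the $z_2$-part) and $\pi_2$ fixes the top block similarly, so $\pi_1(z_1) + \pi_2(2^{S_l} z_2)$ reassembles exactly the image of $a$ under $M = M_1 M_2$.

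For \eqref{eq22}: this is really just a restatement of the block-triangular action. Since $M$ is upper-triangular, $Ma = \binom{M([1,S_l],[1,S_l])\, a' + M([1,S_l],[S_l+1,m])\, a''}{M([S_l+1,m],[S_l+1,m])\, a''}$, and the bottom block depends only on $a''$, hence only on $z_2$; so the bottom $s_l$ coordinates of the image are exactly $\tilde\pi_2(z_2)$, contributing the term $2^{S_l} \tilde\pi_2(z_2)$, and the top block contributes $\tilde\pi_1(z_1)$ once one absorbs the cross term (which only shuffles within the top block) — again the two pieces live in disjoint coordinate ranges, so integer addition is legitimate. The specialization $s_l = 1$ is then immediate: $S_l = m-1$, the bottom block of $M$ is the $1\times 1$ identity, so $M_2$ is the identity matrix and $\tilde\pi_2$ is the identity permutation on $\{0,1\}$; plugging into \eqref{eq21} or \eqref{eq22} yields \eqref{eq1}, and taking $z_2 = 0$ versus $z_2 = 1$ shows $\pi$ preserves $[0,n/2-1]$ and $[n/2, n-1]$ respectively.

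The main obstacle I anticipate is purely notational rather than conceptual: keeping the complementation $z \mapsto 2^m - 1 - z$ consistent with the least-significant-bit-first convention, and justifying rigorously that when two vectors are supported on complementary coordinate blocks their "assembled" integer is the sum of the integers they individually encode (with the $2^{S_l}$ scaling on the high block). Once that splitting lemma is nailed down, everything else is a short matrix computation using only the definitions of $M_1$, $M_2$ and the upper-triangularity of $M$. I would also remark that \eqref{eq21} and \eqref{eq22} are not independent — each exhibits the same map $\pi$ as a "product" of a permutation on the first $S_l$ indices and one on the last $s_l$ indices, the first formula keeping the blocks as additive summands in $\FF_2^m$-coordinates and the second as a genuine direct-product decomposition $[0,2^{S_l}-1] \times [0,2^{s_l}-1]$ — so in the write-up I would prove the splitting lemma once and then derive both.
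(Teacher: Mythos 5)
Your overall route is the paper's: factor $M = M_1M_2$ as in Definition \ref{def1}, observe that $\pi_1$ and $\pi_2$ act on complementary coordinate blocks, and translate back to integers via $z \mapsto 2^m-1-z$. The splitting identity $2^m-1-z = (2^{S_l}-1-z_1)+2^{S_l}(2^{s_l}-1-z_2)$ that you isolate is exactly the bookkeeping the paper leaves implicit, and your treatment of it is correct; the paper's own proof is just the two-line computation $\pi(z_1+2^{S_l}z_2)=\pi_2(\pi_1(z_1)+2^{S_l}z_2)=\pi_1(z_1)+\pi_2(2^{S_l}z_2)$, so your write-up is a more careful version of the same argument.

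There is, however, one genuine flaw in how you handle the off-block-diagonal block $M([1,S_l],[S_l+1,m])$. You attribute its treatment to upper-triangularity, claim for (\ref{eq21}) that it ``sits inside $M_1$'', and for (\ref{eq22}) that the resulting cross term $M([1,S_l],[S_l+1,m])\,a''$ can be ``absorbed'' because it ``only shuffles within the top block''. Neither is right. By Definition \ref{def1}, $M_1$ is zero on rows $[1,S_l]$ against columns $[S_l+1,m]$, so that block does not sit inside $M_1$; and if the cross term were nonzero, the top $S_l$ coordinates of $Ma$ would depend on $a''$, hence on $z_2$, so they could not equal $\tilde{\pi}_1(z_1)$ --- (\ref{eq22}) would simply be false, and $M\neq M_1M_2$ so (\ref{eq21}) would fail too. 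What saves the lemma is that this block is identically zero: the hypothesis $s(M)=\langle s_1,...,s_l\rangle$ means $M$ has the form (\ref{eqma}), whose blocks above the block diagonal vanish, while upper-triangularity kills $M([S_l+1,m],[1,S_l])$; together they make $M$ block \emph{diagonal}, which is precisely why $M=M_1M_2$ holds and why both displayed formulas decouple. Add that one observation and your proof is complete.
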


\begin{proof}
Since $\pi=\pi_2\circ\pi_1$,  
\begin{align*}
& \pi(z_1+2^{S_l}z_2) \\
=& \pi_2\circ\pi_1(z_1+2^{S_l}z_2) \\
=& \pi_2(\pi_1(z_1)+2^{S_l}z_2)  \\
=& \pi_1(z_1)+\pi_2 (2^{S_l}z_2).
\end{align*} 
So (\ref{eq21}) is proved.  (\ref{eq22}) is from $\pi_1(z_1)=\tilde{\pi}_1(z_1)$ and $\pi_2(2^{S_l}z_2) = 2^{S_l}\tilde{\pi}_2(z_2)$. 
\end{proof}

\begin{remark}\label{re2}
Due to the block lower-triangular structure of $M$, $\pi=\pi_2\circ\pi_1= \pi_1\circ\pi_2$, where $\pi_1$ only affects the first $S_l$ bits and $\pi_2$ only affects the last $s_l$ bits. To be specific, permutation $\pi$ can be decomposed into two steps.

Step 1 (the effect of $\pi_1$): Divide $[0,2^m-1]$ into $2^{s_l}$ blocks $[i2^{S_l}, (i+1)2^{S_l}-1]$, $0 \leq i \leq 2^{s_l}-1$, then apply the same permutation $\tilde{\pi}_1$ to each block. 

Step 2 (the effect of $\pi_2$): Treat each block as a whole, apply $\tilde{\pi}_2$  to $2^{s_l}$ blocks. 

This technique will help us decompose the polar code into shorter subcodes in Algorithm \ref{alg:1}.
\end{remark}





\subsection{Distinguishing SC-invariant automorphisms}

Algorithm \ref{alg:1} determines whether affine automorphisms with the block lower-triangular structure $\langle s_1,...,s_l\rangle$ commute with $\text{SC}_{\MI}$  iteratively. We claim that $\pi= \varphi(M)$ commutes with $\text{SC}_{\MI}$ if and only if DecAut$(s(M), \MI)$ outputs TRUE. Let $\pi_1,\pi_2,\tilde{\pi}_1,\tilde{\pi}_2$ be the permutations defined in Definition \ref{def1}. Note that $s(\tilde{\pi}_1) = \langle s_1,...,s_{l-1}\rangle$. We briefly describe procedures of Algorithm \ref{alg:1}. 

First, if $l=1$ (lines 5-7), $\pi$ is SC-invariant if and only if the code belongs to Rate-0, single parity check (SPC), repetition (Rep) or Rate-1 codes\cite{b14}. 

If $l\neq 1$, we recursively determine whether $\pi$ is SC-invariant. According to $s_l$, we consider two cases:

1) $s_l=1$ (lines 8-11), because $\pi_2$ is the identical permutation, $\pi$ is SC-invariant if and only if $\tilde{\pi}_1$ commutes with $C(\MI(A_1))$ and $C(\MI(A_2))$, i.e., the subcodes on upper half branch and lower half branch.

2) $s_l>1$ (lines 12-23), divide $[0,2^m-1]$ into $2^{s_l}$ blocks $[i2^{S_l}, (i+1)2^{S_l}-1]$, $0 \leq i \leq 2^{s_l}-1$. In this case, $\tilde{\pi}_2$ is not identical permutation, then $\pi$ is SC-invariant only if all frozen bits belong to the first block or all information bits belong to the last block. Furthermore, $\tilde{\pi}_1$ must commute with either the first subcode $C(\MI(A_1))$ (lines 13-14) or the last subcode $C(\MI(A_2))$ (lines 16-17) respectively.

\begin{example}\label{ex1}
Assume $C(\MI)$ is a polar code with length $n=16$ and information set $\MI = \{3,5,6,7,9,10,11,12,13,14,15\}$. It is clear that Aut$(C(\MI)) = \text{BLTA}([4])$. We can determine whether $\varphi(M)$ with $s(M)=\langle 3,1\rangle$ commutes with $SC_{\MI}$ by Algorithm \ref{alg:1}. Since $s_2=1$, from lines 8-11, $\text{DecAut}(\langle 3,1\rangle, \MI) = \text{DecAut}(\langle 3\rangle, \{3,5,6,7\})\wedge \text{DecAut}(\langle 3\rangle, \{1,2,3,4,5,6,7\})$. Next, $\text{DecAut}(\langle 3\rangle, \{3,5,6,7\})) = \text{FALSE}$ and $\text{DecAut}(\langle 3\rangle, \{1,2,3,4,5,6,7\})=\text{TRUE}$ from line 6. Therefore, we conclude that the algorithm will output FALSE so that $\varphi(M)$ with $s(M)=\langle 3,1\rangle$ does not commute with $SC_{\MI}$.
\end{example}

\begin{figure}[!t]
\begin{algorithm}[H]
\caption{DecAut$(\langle s_1,...,s_l\rangle, \MI)$}
\begin{algorithmic}[1]\label{alg:1}

\renewcommand{\algorithmicrequire}{\textbf{Input:}}
\renewcommand{\algorithmicensure}{\textbf{Output:}}
\REQUIRE block lower-triangular structure $\langle s_1,...,s_l\rangle$, information set $\MI$
\ENSURE  $a$ is a boolean value and $a$ is TRUE if and only if automorphisms with the block lower-triangular structure $\langle s_1,...,s_l\rangle$ commute with $\text{SC}_{\MI}$.
\STATE $m\gets \sum_{i=1}^l s_i$; $S_l\gets \sum_{i=1}^{l-1} s_i$;
\STATE $\MF\gets \{0,...,2^m-1\}/\MI$;
\STATE $A_1\gets \text{Ind}_m(a_{S_l+1}=1,...,a_m=1)$;
\STATE $A_2\gets \text{Ind}_m(a_{S_l+1}=0,...,a_m=0)$;
\IF{$l = 1$}
 \STATE $a \gets (\MF\subseteq A_1) \vee (\MI\subseteq A_2)$;
\ELSE
  \IF{$ s_l = 1$}
    \STATE $a_1 \gets \text{DecAut}(\langle s_1,...,s_{l-1}\rangle, \MI(A_1))$;
    \STATE $a_2 \gets \text{DecAut}(\langle s_1,...,s_{l-1}\rangle, \MI(A_2))$;
    \STATE $a \gets a_1\wedge a_2$;
  \ELSE
    \IF{$\MF\subseteq A_1$}
       \STATE $a \gets \text{DecAut}(\langle s_1,...,s_{l-1}\rangle, \MI(A_1))$;
    \ELSE
       \IF{$\MI\subseteq A_2$} 
           \STATE $a\gets \text{DecAut}(\langle s_1,...,s_{l-1}\rangle, \MI(A_2))$;
        \ELSE
            \STATE $a\gets$ FALSE;
        \ENDIF
     \ENDIF
   \ENDIF
\ENDIF
\end{algorithmic}
\end{algorithm}
\end{figure}

In Theorem \ref{thm1} and Theorem \ref{thm2}, we prove the sufficiency and necessity of Algorithm \ref{alg:1}, respectively.

For convenience, denote by $L_{i,t}$ and $u_{i,t}$ the LLRs and hard decisions of the $i$-th node at stage $t$ before permutation, and $L'_{i,t}$ and $u'_{i,t}$ the LLRs and hard decisions after permutation (see Section \ref{sec2C}).

\begin{theorem}[\textbf{Sufficiency}]\label{thm1}
Let $M$ be a block lower-triangular matrix with $s(M)=\langle s_1,...,s_l\rangle$ and $\pi= \varphi(M)$ is an automorphism of $C(\MI)$ with length $n=2^m$. $\pi$ commutes with $\text{SC}_{\MI}$ if DecAut$(s(M), \MI)$ outputs TRUE.
\end{theorem}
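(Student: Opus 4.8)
The plan is to prove Theorem~\ref{thm1} by induction on the number of blocks $l$ in $s(M)=\langle s_1,\dots,s_l\rangle$, exploiting the permutation decomposition of Lemma~\ref{lemma2} and the structural compatibility of SC decoding with the half-branch splitting of the factor graph. By Remark~\ref{re1} we may assume $M$ is upper-triangular, since lower-triangular transformations preserve both the block structure and the property of commuting with $\text{SC}_\MI$.

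First I would dispose of the base case $l=1$. Here DecAut outputs TRUE exactly when $\MF\subseteq A_1$ (Rate-0 or repetition-type, depending on $M$; actually the condition $\MF\subseteq A_1=\text{Ind}_m(a_1=1,\dots,a_m=1)$ forces $|\MF|\le 1$, i.e.\ Rate-1 or SPC) or $\MI\subseteq A_2$ (Rate-0 or Rep). In each of these four node types the SC decoder output is a symmetric function of the input LLRs in a way that is invariant under \emph{any} affine automorphism of the node — this is the classical observation of~\cite{b14}, and I would cite it rather than re-derive it. Concretely, for Rate-0 the output is always the zero codeword; for Rep it depends only on the sign of $\sum_i L_{i,m}$, which is permutation-invariant; for Rate-1 it is the hard-decision vector, which commutes with coordinate permutation; for SPC one flips the least-reliable bit of the hard-decision vector, again permutation-equivariant.

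For the inductive step I would split on $s_l$. If $s_l=1$, Lemma~\ref{lemma2} (equation~\eqref{eq1}) tells us $\pi$ fixes the upper half $[0,n/2-1]$ and the lower half $[n/2,n-1]$ setwise, and acts on each half as the \emph{same} permutation $\tilde\pi_1=\varphi(M([1,m-1],[1,m-1]))$. I would then trace the SC recursion: the LLRs $L_{i,m-1}=f(L_{i,m},L_{i+n/2,m})$ at the left half-node and $L_{i+n/2,m-1}=g(u_{i,m-1},L_{i,m},L_{i+n/2,m})$ at the right half-node. Because $\tilde\pi_1$ is applied identically to both halves of the channel LLR vector, the permuted left-node input is $\tilde\pi_1$ applied to the original left-node input, so by induction (DecAut$(\langle s_1,\dots,s_{l-1}\rangle,\MI(A_1))=$TRUE) the left subcode's SC output commutes with $\tilde\pi_1$; this makes $u'_{i,m-1}=\tilde\pi_1(u_{i,m-1})$, which in turn makes the right-node permuted input equal to $\tilde\pi_1$ of the original, and induction on $\MI(A_2)$ finishes the case. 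The bookkeeping here — that $\MI(A_1)$ and $\MI(A_2)$ are exactly the information sets of the two half-subcodes, which is immediate from the definitions in Section~III-A — is routine but must be stated.

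The case $s_l>1$ is the main obstacle and where I would spend the most care. Now $\tilde\pi_2$ is a nontrivial permutation of the $2^{s_l}$ blocks of size $2^{S_l}$, and DecAut required $\MF\subseteq A_1$ (all frozen bits in the first block) or $\MI\subseteq A_2$ (all information bits in the last block). Consider the first sub-case: the first $s_l$ stages of SC decoding (stages $m$ down to $S_l$) process the block structure, and since every bit outside the first block is frozen, the hard decisions fed into the $g$-functions are all zero, so these stages act as a pure linear combination that is equivariant under the block permutation $\tilde\pi_2$ — the surviving ``live'' subcode is exactly $C(\MI(A_1))$ with $\tilde\pi_1$ acting on it, and induction applies. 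In the second sub-case ($\MI\subseteq A_2$), all blocks except the last are Rate-0, SC decodes them to zero regardless of order, and what remains is $C(\MI(A_2))$ under $\tilde\pi_1$. The delicate point is verifying that permuting the blocks by $\tilde\pi_2$ genuinely does not disturb the SC output of the frozen/zero portion and that the LLRs reaching the live block are the $\tilde\pi_1$-image of the unpermuted ones; I would argue this by an explicit (but short) examination of how $f$ and $g$ propagate through the top $s_l$ stages when one side is entirely frozen, using equation~\eqref{eq21} of Lemma~\ref{lemma2} to align indices. Once that is in hand, the induction hypothesis on the strictly shorter structure $\langle s_1,\dots,s_{l-1}\rangle$ closes the proof.
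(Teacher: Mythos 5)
Your overall architecture --- reduce to upper-triangular $M$ via Remark \ref{re1}, induct on $l$, handle $l=1$ via the Rate-0/Rep/SPC/Rate-1 classification, and split the inductive step on $s_l=1$ versus $s_l>1$ using Lemma \ref{lemma2} --- is exactly the paper's, and your base case and your $s_l=1$ case are sound and essentially identical to the published argument.

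The gap is in the sub-case $s_l>1$ with $\MF\subseteq A_1$. You write that ``every bit outside the first block is frozen, the hard decisions fed into the $g$-functions are all zero,'' and conclude that the only live subcode is $C(\MI(A_1))$. This reads the condition backwards: $\MF\subseteq A_1$ says the frozen set is \emph{contained in} the first block $A_1$, so every index outside $A_1$ is an \emph{information} bit, and the hard decisions feeding the $g$-functions are certainly not all zero. (A scenario with everything outside a single block frozen is the other sub-case, $\MI\subseteq A_2$, which you then also describe, correctly, in its own right.) Consequently your sketch never addresses the actual content of the $\MF\subseteq A_1$ case: after the first block is decoded --- by the inductive hypothesis applied to $C(\MI(A_1))$, using $L_{\pi_1(i),S_l}=L'_{i,S_l}$ from Lemma \ref{lemma2} --- the remaining $2^{s_l}-1$ blocks are entirely unfrozen, and what must be shown is that the nontrivial block permutation $\tilde{\pi}_2$ commutes with decoding them. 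The paper does this by viewing $C(\MI)$, restricted to each transverse index set $A'=\text{Ind}_m(a_1=c_1,\dots,a_{S_l}=c_{S_l})$, as an independent length-$2^{s_l}$ SPC code whose first coordinate is pinned to the already-decoded value $u'_{z_1,S_l}$; since SPC decoding under SC is ML and hence commutes with \emph{any} coordinate permutation, it commutes with $\tilde{\pi}_2$, giving $u'_{i,m}=u_{\pi(i),m}$. That SPC-across-blocks observation is the crux of the sufficiency proof and is absent from your proposal; the ``pure linear combination equivariant under $\tilde{\pi}_2$'' claim does not substitute for it. Your $\MI\subseteq A_2$ sub-case is essentially right (all blocks but the last are Rate-0, the output is the last block's decision repeated across blocks, and the last block's stage-$S_l$ LLR is a block-permutation-invariant sum), so repairing the first sub-case is all that is needed.
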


\begin{proof}
From Remark \ref{re1}, assume $M$ is an upper-triangular matrix. We prove the theorem by induction on $l$. If $l=1$, the theorem holds since the condition implies the code is one of Rate-0, SPC, Rep or Rate-1 code, where SC decoding is equivalent to ML decoding, and thus invariant under permutations, and any permutation produces the same decoding result.

Let $\pi_1,\pi_2,\tilde{\pi}_1,\tilde{\pi}_2$ be the permutations defined in Definition \ref{def1}. For the induction step $l-1\to l$, There are two cases we need to consider. The first is $s_l=1$. Divide $[0,2^{m-1}]$ into upper half branch $[0, 2^{m-1}-1]$ and lower half branch $[2^{m-1}, 2^m-1]$. As mentioned in Remark \ref{re2}, $\tilde{\pi}_2$ is the identical permutation so bits in the upper or lower half branch remain in the same branch after permutation. Thus, the LLRs at stage $m-1$ of the upper and lower half branches are permuted by $\tilde{\pi}_1$, respectively. Then by induction, $\tilde{\pi}_1$ is SC-invariant. It follows that $\pi$ is SC-invariant.

Now let us discuss the proof in detail. First consider the upper half branch. Note that $\pi(i)=\tilde{\pi}_1(i)$ for $0\leq i\leq 2^{m-1}-1$. (\ref{eq1}) implies $L'_{i+2^{m-1},m} =  L_{\pi(i+2^{m-1}),m} = L_{\tilde{\pi}_1(i)+2^{m-1},m}$ and then 
\begin{align*}
L_{\tilde{\pi}_1(i),m-1} &= f(L_{\tilde{\pi}_1(i),m},L_{\tilde{\pi}_1(i)+2^{m-1},m}) \\
& =f(L_{\pi(i),m},L_{\pi(i+2^{m-1}),m}) \\
& =f(L'_{i,m} + L'_{i+2^{m-1},m}) =L'_{i,m-1}.
\end{align*}
Because that DecAut$(\langle s_1,...,s_{l-1},1\rangle, \MI)$ outputs TRUE implies that DecAut$(\langle s_1,s_2,...,s_{l-1}\rangle, \MI(\text{Ind}_m(a_m=1)))$ outputs TRUE, by inductive hypothesis, 
\begin{equation}
u'_{i,m-1} = u_{\tilde{\pi}_1(i),m-1} = u_{\pi_1(i),m-1}. \label{eq31}
\end{equation}

Next, we consider the lower half branch. For $0\leq i\leq 2^{m-1}-1$ 
\begin{align*}
L_{\tilde{\pi}_1(i)+2^{m-1},m-1} &= g(u_{\tilde{\pi}_1(i),m-1},L_{\tilde{\pi}_1(i),m},L_{\tilde{\pi}_1(i)+2^{m-1},m}) \\
& = g(u'_{i,m-1},L'_{i,m},L'_{i+2^{m-1},m}) \\
& = L'_{i+2^{m-1},m-1}.
\end{align*}
Because that DecAut$(\langle s_1,...,s_{l-1},1\rangle, \MI)$ outputs TRUE implies that DecAut$(\langle s_1,s_2,...,s_{l-1}\rangle, \MI(\text{Ind}_m(a_m=0)))$ outputs TRUE, by inductive hypothesis, 
\begin{equation}
u'_{i+2^{m-1},m-1} = u_{\tilde{\pi}_1(i)+2^{m-1},m-1} = u_{\pi(i)+2^{m-1},m-1}. \label{eq32}
\end{equation} 

It follows from (\ref{eq31}) and (\ref{eq32}) that $u_{\pi(i),m} = u'_{i,m}$. 

Now we turn to the case $s_l> 1$. In this case, $\tilde{\pi}_2$ is not identical permutation, additional conditions are required to ensure SC-invariance of $\pi$. We divide $[0,2^m-1]$ into $2^{s_l}$ blocks $\text{Ind}_m(a_{S_l+1}=c_{S_l+1},...,a_m=c_m)$. By lines 6-10 of Algorithm \ref{alg:1}, either all frozen bits belong to the first block $A_1= \text{Ind}_m(a_{S_l+1}=1,...,a_{m}=1)$ or all information bits belong to the last block $A_2= \text{Ind}_m(a_{S_l+1}=0,...,a_m=0)$. 

1) If $\MF\subseteq A_1$, from Lemma \ref{lemma2}, we have $L_{\pi_1(i),S_l} = L'_{i,S_l}$ for $i\in A_1$. Notice that DecAut$(\langle s_1,s_2,...,s_l\rangle, \MI)$ outputs TRUE and $\MF\subseteq A_1$ imply that DecAut$(\langle s_1,s_2,...,s_{l-1}\rangle, \MI(A_1))$ outputs TRUE. Then by inductive hypothesis, $u_{\pi_1(i),S_l} = u'_{i,S_l}$ for $i\in A_1$. Notice that $\text{Ind}_m(a_{S_l+1}=c_{S_l+1},...,a_m=c_m)\subseteq \MI$ for $a_{S_l+1},...,a_m$ are not all one, then $u_{i,S_l} = \text{sign}(L_{i,S_l})$ for $i\notin A_1$.

Then $C(\MI)$ can be viewed as $2^{S_l}$ independent length-$2^{s_l}$ SPC codes. Define $A' = \text{Ind}_m(a_1=c_1,...,a_{S_l}=c_{S_l})$ and $\tilde{y}= (L'_{i,m})_{i\in A'}=(L_{\pi(i),m})_{i\in A'}=\tilde{\pi}_2(L_{\pi_1(i),m})_{i\in A'}$, then $(u'_{i,m})_{i\in A'} = \text{SC}_{\MI'}(\tilde{y})$ where $\MI'=\{1,...,2^{s_l}-1\}$ and the first bit is frozen to $u'_{z_1,S_l}$ with $z_1=(a_1,...,a_{S_l},1,...,1)^T$. That is, $(u'_{i,m})_{i \in A'}$ can be decoded as a length-$2^{s_l}$ SPC code with LLR vector $\tilde{y}$.

Since $\tilde{\pi}_2$ commutes with $\MI'$, we have
\begin{align*}
(u'_{i,m})_{i\in A'} & = \text{SC}_{\MI'}(\tilde{y}) = \text{SC}_{\MI'}(\tilde{\pi}_2(L_{\pi_1(i),m})_{i\in A'})\\
&=\tilde{\pi}_2(\text{SC}_{\MI'}(L_{\pi_1(i),m})_{i\in A'})= \tilde{\pi}_2(u_{\pi_1(i),m})_{i\in A'} \\
&= (u_{\pi(i),m})_{i\in A'},
\end{align*}
thus $u'_{i,m} = u_{\pi(i),m}$.

2) If $\MI\subseteq A_2$ we have $u_{i,S_l} = u'_{i,S_l} = 0$ for $i\notin A_2$. Thus,
\begin{equation}
u_{i,m} = u_{j,S_l}; u'_{i,m} = u'_{j,S_l}. \label{eq4}
\end{equation}
for $j\in A_2$ and $j\equiv i \mod 2^{S_l}$. From Lemma \ref{lemma2}, $L_{\pi_1(j),S_l} = L'_{j,S_l}$ for $j\in A_2$. Notice that DecAut$(\langle s_1,s_2,...,s_l\rangle, \MI)$ outputs TRUE and $\MI\subseteq A_2$ imply that DecAut$(\langle s_1,s_2,...,s_{l-1}\rangle, \MI(A_2))$ outputs TRUE. By inductive hypothesis, 
\begin{equation}
u_{\pi_1(j),S_l} = u'_{j,S_l}. \label{eq5}
\end{equation}  
Then
\[ u'_{i,m} = u'_{j,S_l} = u_{\pi_1(j),S_l}=u_{\pi(i),m},\]
where $j\equiv i \mod 2^{S_l}$ and $j\in A_2$. Here the first equation is from (\ref{eq4}), the second equation is from (\ref{eq5}), and the last is because of (\ref{eq4}) and $\pi_1(j)\equiv \pi(j)\equiv \pi(i) \mod 2^{S_l}$ from Lemma \ref{lemma2}.
\end{proof}

The next lemma allows us to claim an automorphism is not SC-invariant by decomposing the automorphism on the upper and lower half branches even if $s_l\neq 1$. It will help us prove the necessity. 

\begin{lemma}\label{lemma3}
Let $C(\MI)$ be a decreasing monomial code with length $n=2^m$. $\pi=\varphi(M)$ is an automorphism of $C(\MI)$, where $M$ is an upper-triangular matrix. Let $A_i = \text{Ind}_m(a_m=i)$ and $\MI_i =\MI(A_i)$, $i = 0,1$, denote $\pi' = \varphi(M([1,m-1],[1,m-1]))$, then $\pi$ commutes with $\text{SC}_{\MI}$ implies $\pi'$ commutes with $\text{SC}_{\MI_1}$ and $\text{SC}_{\MI_0}$, i.e., $\pi'$ commutes with the subcodes on upper and lower half branches.
\end{lemma}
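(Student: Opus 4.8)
The plan is to realize the two half-length subcode decoders $\text{SC}_{\MI_0}$ and $\text{SC}_{\MI_1}$ as restrictions of $\text{SC}_{\MI}$ to specially chosen inputs, and then to transport the commutation identity $\text{SC}_{\MI}(\pi(y))=\pi(\text{SC}_{\MI}(y))$ across these restrictions. The first ingredient is structural: because $M$ is upper-triangular its last row is $(0,\dots,0,1)$, so $(Ma)_m=a_m$ and $\pi$ fixes the most significant bit; hence, exactly as in the $s_l=1$ case inside the proof of Theorem~\ref{thm1}, $\pi$ maps the upper half branch $[0,n/2-1]$ and the lower half branch $[n/2,n-1]$ each onto itself. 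Writing $M'=M([1,m-1],[1,m-1])$ and $v=M([1,m-1],[m,m])$, so that $Ma=(M'a'+a_mv,\,a_m)^T$ for $a=(a',a_m)^T$, the restriction of $\pi$ to the lower half branch ($a_m=0$) is the linear permutation $\pi'=\varphi(M')$, and its restriction to the upper half branch ($a_m=1$) is the affine permutation $\sigma:=(M',v)$. Using the polar decomposition (a word $(c_1,c_2)$ lies in $C(\MI)$ if and only if $c_2\in C(\MI_0)$ and $c_1\oplus c_2\in C(\MI_1)$), I would first check that applying the automorphism $\pi$ to the codewords $(c,c)$ with $c\in C(\MI_0)$ and $(c,0)$ with $c\in C(\MI_1)$ forces $\pi'$ to be an automorphism of $C(\MI_0)$ and $\sigma$ (hence, via the remark that $(M',0)$ and $(M',v)$ play the same role, $\pi'$) an automorphism of $C(\MI_1)$, so that both conclusions are even meaningful.

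For the $\MI_0$ part I would lift $y'\in\RR^{n/2}$ to $\iota_0(y')\in\RR^n$, equal to $0$ on the upper half and to $y'$ on the lower half. Running the recursion of Section~\ref{sec2C}, the relation $f(0,b)=0$ makes all stage-$(m-1)$ upper-branch LLRs vanish, so the upper subcode decodes to the all-zero codeword; then $g(0,0,b)=b$ makes the stage-$(m-1)$ lower-branch LLRs equal $y'$, so $\text{SC}_{\MI}(\iota_0(y'))=(\text{SC}_{\MI_0}(y'),\text{SC}_{\MI_0}(y'))$. Combined with $\pi\circ\iota_0=\iota_0\circ\pi'$ (both sides vanish on the upper half; on the lower half $\pi$ acts as $\pi'$), the fact that reading off the lower half $\rho_0((c_1,c_2)):=c_2$ satisfies $\rho_0\circ\pi=\pi'\circ\rho_0$ on $C(\MI)$, and $\text{SC}_{\MI_0}=\rho_0\circ\text{SC}_{\MI}\circ\iota_0$, the chain
\[
\text{SC}_{\MI_0}(\pi'(y'))=\rho_0\big(\text{SC}_{\MI}(\pi(\iota_0(y')))\big)=\rho_0\big(\pi(\text{SC}_{\MI}(\iota_0(y')))\big)=\pi'\big(\text{SC}_{\MI_0}(y')\big)
\]
(the middle step being the hypothesis) gives $\pi'\in[\Ba]_{\MI_0}$.

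For the $\MI_1$ part I would do the same with $\iota_1(y')$ equal to $y'$ on the upper half and to $+\infty$ on the lower half (a large enough constant, with a limiting argument, works too): $f(a,+\infty)=a$ keeps the stage-$(m-1)$ upper-branch LLRs equal to $y'$, so the upper subcode decodes to $\text{SC}_{\MI_1}(y')$, while $g(u,a,+\infty)=+\infty$ drives the stage-$(m-1)$ lower-branch LLRs to $+\infty$, so the lower subcode decodes to the all-zero codeword and $\text{SC}_{\MI}(\iota_1(y'))=(\text{SC}_{\MI_1}(y'),0)$. Here $\pi\circ\iota_1=\iota_1\circ\sigma$, and the upper-subcode readout is $\rho_1((c_1,c_2)):=c_1\oplus c_2$, with $\text{SC}_{\MI_1}=\rho_1\circ\text{SC}_{\MI}\circ\iota_1$. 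Chaining as before yields $\sigma\in[\Ba]_{\MI_1}$, and then $\pi'\in[\Ba]_{\MI_1}$ by the remark that $(M',0)\in[\Ba]_{\MI_1}$ iff $(M',v)\in[\Ba]_{\MI_1}$. I expect the $\MI_1$ half to be the main obstacle for two reasons: one cannot use zeros on the lower half (that would also kill the upper branch), which forces the ``$+\infty$'' genie trick and a little care about infinite LLRs; and, more delicately, $\rho_1$ is \emph{not} $\pi$--$\sigma$ equivariant on all of $C(\MI)$, but only on codewords whose lower half is zero, namely $\rho_1(\pi((c_1,0)))=\sigma(c_1)=\sigma(\rho_1((c_1,0)))$, so one must observe that these are precisely the codewords arising as $\text{SC}_{\MI}(\iota_1(y'))$, which is exactly what makes the chain close.
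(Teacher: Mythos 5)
Your proof is correct and uses essentially the same construction as the paper: saturating one half of the channel LLR vector (with $+\infty$, resp.\ zero/small values) so that $\text{SC}_{\MI}$ degenerates into the half-length decoders $\text{SC}_{\MI_1}$ and $\text{SC}_{\MI_0}$, together with the observation that $\pi$ restricted to the half branches acts as $\varphi(M')$ and the affine shift $(M',v)$, whose SC-invariance is equivalent. The only differences are presentational: you run the argument directly as an intertwining identity $\text{SC}_{\MI_i}=\rho_i\circ\text{SC}_{\MI}\circ\iota_i$, whereas the paper argues contrapositively by converting a failure of commutation for $\pi'$ into a counterexample for $\pi$, and your exactly-zero LLRs for the $\MI_0$ half (forcing the upper subcode to the all-zero codeword) are a slightly cleaner choice than the paper's $\varepsilon$.
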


\begin{proof}
If $\pi'$ does not commute with $\text{SC}_{\MI_1}$, because $\pi'' \triangleq \pi|_{A_1} = (M([1,m-1],[1,m-1]),M([1,m-1],m))$, $\pi''$ does not commute with $\text{SC}_{\MI_1}$ as well. So there exists $y \in \RR^{2^{m-1}}$ such that $\pi''(\text{SC}_{\MI_1}(y))\neq \text{SC}_{\MI_1}(\pi''(y))$. 

Now we can construct an example from $y$ to show $\pi$ does not commute with $\text{SC}_{\MI}$. To be specific, let $(L_{i,m})_{i\in A_1}=y$ and $(L_{i,m})_{i \in A_0}= + \infty$, then $L_{i,m-1} = f(L_{i,m},+\infty) = L_{i,m}$ for $i \in A_1$. Therefore, $(L_{i,m-1})_{i\in A_1} = y$ and $(L'_{i,m-1})_{i\in A_1} = \pi''(y)$. Since $\pi''(\text{SC}_{\MI'}(y))\neq \text{SC}_{\MI'}(\pi''(y))$, we have for some $j$
\begin{equation}
u_{\pi''(j),m-1}\neq u'_{j,m-1}. \label{eq2}
\end{equation}
For $i \in A_1$, $L_{i+2^{m-1},m-1} = g(u_{i,m-1}, L_{i,m},+ \infty) = + \infty$. Similarly, $L'_{i+2^{m-1},m-1} = + \infty$. Thus
\begin{equation}
u_{i+2^{m-1},m-1} = u'_{i+2^{m-1},m-1} = 0. \label{eq3}
\end{equation}
Together with (\ref{eq2}) and (\ref{eq3}), $u_{\pi(j),m}\neq u'_{j,m}$ for some $j$. 

$\pi'$ commutes with $\text{SC}_{\MI_0}$ can be proved similarly when $(L_{i,m})_{i \in A_1} = \varepsilon$ and $(L_{i,m})_{i \in A_0}=y$, where $\varepsilon$ is positive and small enough.
\end{proof}

The next lemma proves two special cases of the necessity by decomposing the permutation on the subcodes consisting of odd and even indices. 

\begin{lemma}\label{lemma4}
Let $M$ be a block lower-triangular matrix with $s(M)=\langle 1,1,...,1,s_l\rangle$ where $s_l=2$ or $3$. $\pi= \varphi(M)$ is an automorphism of $C(\MI)$ with length $n=2^m$. Then $\pi$ commutes with $C(\MI)$ only if $\MF\subseteq \text{Ind}_m(a_{S_l+1}=1,...,a_m=1)$ or $\MI\subseteq \text{Ind}_m(a_{S_l+1}=0,..,a_m=0)$.
\end{lemma}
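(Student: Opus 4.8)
The plan is to put $M$ in a simple form and then induct on $l$ (the number of blocks of $s(M)$, the first $l-1$ of which have size $1$), peeling off one leading block at a time via the least-significant-coordinate, i.e.\ odd/even, decomposition. For the normalisation: by Lemma~\ref{lemma1} and Remark~\ref{re1} I may assume $M$ is upper-triangular, and being also block-lower-triangular with $s(M)=\langle 1,\dots,1,s_l\rangle$ it is then forced to be block-diagonal, $M=I_{S_l}\oplus B$ with $B$ an irreducible upper-triangular $s_l\times s_l$ matrix, $s_l\in\{2,3\}$. Hence $\pi=\varphi(M)$ fixes the coordinate $a_1$ and restricts to an automorphism of each subcode $C_1=C(\MI(\text{Ind}_m(a_1{=}1)))$ and $C_0=C(\MI(\text{Ind}_m(a_1{=}0)))$, acting on both as $\varphi(M')$ with $M'=M([2,m],[2,m])=I_{S_l-1}\oplus B$; note $s(M')=\langle 1,\dots,1,s_l\rangle$ with one fewer leading one, that $C_0,C_1$ are decreasing, and that $\MI(\text{Ind}_m(a_1{=}1))\subseteq\MI(\text{Ind}_m(a_1{=}0))$.

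The base $l=1$, i.e.\ $s(M)=\langle s_l\rangle$, is precisely the Rate-$0$/SPC/Rep/Rate-$1$ criterion invoked for $l=1$ in Algorithm~\ref{alg:1}, in which the two alternatives are exactly $\MF\subseteq\text{Ind}_m(a_1{=}\cdots{=}a_m{=}1)$ and $\MI\subseteq\text{Ind}_m(a_1{=}\cdots{=}a_m{=}0)$. For the inductive step I would first prove the odd/even analogue of Lemma~\ref{lemma3}: if $\pi$ commutes with $\text{SC}_\MI$ then $\varphi(M')$ commutes with $\text{SC}_{C_1}$ and with $\text{SC}_{C_0}$. Since $a_1$ is split only at the last SC stage, this follows by the $\pm\infty/\varepsilon$ device of Lemma~\ref{lemma3} applied to one parity class: putting $(L_{i,m})_{i\ \text{odd}}=+\infty$ keeps the odd-index LLRs equal to $+\infty$ at every stage, makes each stage-$1{\to}0$ node $f(L_{2r,1},+\infty)=L_{2r,1}$, forces all odd message bits to $0$, and makes the even-index subsystem of $\text{SC}_\MI$ coincide with $\text{SC}_{C_1}$ run on $(L_{i,m})_{i\ \text{even}}$; symmetrically, letting $(L_{i,m})_{i\ \text{even}}$ be a small positive $\varepsilon$ makes $g(u_{2r,0},L_{2r,1},L_{2r+1,1})=L_{2r+1,1}+O(\varepsilon)$, so that for generic inputs and $\varepsilon\to0$ the odd-index subsystem coincides with $\text{SC}_{C_0}$ run on $(L_{i,m})_{i\ \text{odd}}$. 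Because $\pi$ fixes $a_1$ it carries each such specially shaped input to another of the same shape, so $\pi$-equivariance of $\text{SC}_\MI$ descends to $\varphi(M')$-equivariance of $\text{SC}_{C_1}$ and $\text{SC}_{C_0}$.

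Applying the inductive hypothesis to $C_1$ and $C_0$ gives, for each, that its frozen set lies in its ``$(\cdots{=}1)$'' index set or its information set lies in its ``$(\cdots{=}0)$'' index set. If both frozen sets are so concentrated then $\MF\subseteq\text{Ind}_m(a_{S_l+1}{=}\cdots{=}a_m{=}1)$, and if both information sets are so concentrated then $\MI\subseteq\text{Ind}_m(a_{S_l+1}{=}\cdots{=}a_m{=}0)$; both deductions use only that $\MI$ is decreasing (together with $\MI(\text{Ind}_m(a_1{=}1))\subseteq\MI(\text{Ind}_m(a_1{=}0))$). One of the two mixed cases is already combinatorially impossible, since it would force $\MI(\text{Ind}_m(a_1{=}1))$ to contain every monomial not divisible by $x_{S_l+1}\cdots x_m$ while consisting only of monomials in $x_1,\dots,x_{S_l}$, a contradiction for $s_l\ge2$ (witness $x_{S_l+1}$). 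The other mixed case --- even subcode almost a repetition code, odd subcode almost an SPC code --- does occur for genuinely decreasing $\MI$ (e.g.\ $\mathrm{RM}(1,3)$) and there the claimed conclusion fails, so it cannot be excluded by the sub-dichotomies alone; here one must use the full hypothesis $\text{SC}_\MI\circ\pi=\pi\circ\text{SC}_\MI$ and exhibit, using that $\varphi(B)$ with $B$ irreducible of size $2$ or $3$ moves an explicit codeword, an input $y$ with $\text{SC}_\MI(\pi(y))\ne\pi(\text{SC}_\MI(y))$, contradicting it. I expect this last step to be the crux: ruling out the ``almost-repetition/almost-SPC'' configuration is not purely combinatorial but needs a concrete decoding counterexample, and this is where the smallness $s_l\in\{2,3\}$ enters.
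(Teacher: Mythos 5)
Your overall strategy --- normalize $M$, peel off the leading coordinate $a_1$ via the odd/even decomposition, transfer non-commutation between code and subcodes with the $+\infty/\varepsilon$ device, and induct --- is exactly the paper's. The difference is in how the two subcode dichotomies are recombined, and that is where your proof has a genuine gap: you leave the ``even subcode almost-Rep / odd subcode almost-SPC'' mixed case ($\MI(\text{Ind}_m(a_1{=}1))\subseteq A_2$ together with $\MF(\text{Ind}_m(a_1{=}0))\subseteq A_1$) unresolved, and you assert it must be killed by an explicit decoding counterexample. In fact this case is excluded \emph{combinatorially} for $m\geq 4$ by the decreasing property: $\MI(\text{Ind}_m(a_1{=}1))\subseteq A_2$ forces $\text{Ind}_m(a_1{=}1,a_{m-1}{=}1,a_m{=}0)\subseteq\MF$, while $\MF(\text{Ind}_m(a_1{=}0))\subseteq A_1$ forces $\text{Ind}_m(a_1{=}0,a_{m-1}{=}1,a_m{=}0)\subseteq\MI$; taking the witnesses $x_2x_{m-1}\in\MI$ and $x_1x_{m-1}\in\MF$ with $x_1x_{m-1}\preccurlyeq x_2x_{m-1}$ contradicts decreasingness as soon as the indices $1,2,m-1,m$ are distinct. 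Your counterexample $\mathrm{RM}(1,3)$ has $m=3$ and therefore belongs to the base cases, which the paper settles by exhaustive search for $m\leq 3$ rather than by the inductive step; it does not show that the recombination fails in general. So the ``crux'' you flag does not require a decoding argument at all --- it requires the monomial-order argument you did not find --- and as written your proof is incomplete at precisely that step.

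Two smaller points. First, your base case $l=1$ cites the Rate-0/SPC/Rep/Rate-1 criterion from Algorithm~\ref{alg:1}, but that criterion is the \emph{sufficiency} statement; what the lemma needs at the base is the \emph{necessity} (an irreducible $2\times2$ or $3\times3$ block that is not one of these four codes admits a non-commuting input), which the paper obtains by computer search for $m=2,3$ and which you should not treat as already established. Second, the paper's recombination is phrased contrapositively (if the dichotomy fails for $\MI$ it fails for at least one subcode, using $\MI(\text{Ind}_m(a_1{=}1))\subseteq\MI(\text{Ind}_m(a_1{=}0))$ to dispose of your ``impossible'' mixed case); this is logically equivalent to your forward formulation, so once the missing monomial argument is supplied your proof coincides with the paper's.
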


\begin{proof}
We inducted on $m$, if $m = 2,3$, the lemma can be proved by exhaustive search. For the induction step $m-1\to m$, assume $s(M)=\langle 1,1,...,1,2\rangle$. Let $\MI$ be an information set such that $\MF\not\subseteq A_1 = \text{Ind}_m(a_{m-1}=1,a_m=1)$ and $\MI\not\subseteq A_2 = \text{Ind}_m(a_{m-1}=0,a_m=0)$. Divide $\MI$ into two information sets $\MI_1=\MI(\text{Ind}_m(a_1=1))$ on the even bits and $\MI_2=\MI(\text{Ind}_m(a_1=0))$ on the odd bits, then $\MF_1=\text{Ind}_m(a_1=1)-\MI_1$ and $\MF_2=\text{Ind}_m(a_1=0)-\MI_2$. 

We are going to show that at least one of $\MI_1$ and $\MI_2$ does not satisfy the condition. First, $\MF_1\not\subseteq A_1$, since $\MF_1\subseteq A_1$ implies $\MF_2\subseteq A_1$ by decreasing property, which is contradictory against $\MF\not\subseteq A_1$. Similarly, $\MI_2\not\subseteq A_2$.   

We claim that one of $\MF_2\not\subseteq A_1$ and $\MI_1\not\subseteq A_2$ must hold, since otherwise $\text{Ind}_m(a_1=0,a_{m-1}=1,a_m=0)\subseteq \MI$ and $ \text{Ind}_m(a_1=1,a_{m-1}=1,a_m=0)\subseteq\MF$. Since $m>4$,  $\text{Ind}_m(a_1=0,a_2=1,a_{m-1}=1,a_m=0)\subseteq \MI$ and $\text{Ind}_m(a_1=1,a_2=0,a_{m-1}=1,a_m=0)\subseteq\MF$, which are contradictory against $\MI$ is a decreasing set when $m\geq 4$. 

Now we are going to construct a counter-example by induction. Assume $\MI_1\not\subseteq A_2$, denote $\pi' = \varphi(M([2,m],[2,m]))$. From inductive hypothesis, there exists some $\tilde{y}\in\RR^{2^{m-1}}$ such that $\pi'(\text{SC}_{\MI_1}(\tilde{y}))\neq \text{SC}_{\MI_1}(\pi'(\tilde{y}))$, which implies $\varphi(M) $ does not commute with $\text{SC}_{\MI}$ by setting $(L_{i,m})_{i\in \text{Ind}_m(a_1=1)}=\tilde{y}$ and $L_{i,m} = +\infty$ otherwise. If $\MF_2\not\subseteq A_1$, denote $(L_{i,m})_{i\in \text{Ind}_m(a_1=0)}=\tilde{y}$ and $L_{i,m} =  \varepsilon$ where $\varepsilon$ is positive and small enough otherwise.

If $s(M)=\langle 1,1,...,1,3\rangle$, the proof is similar if we take $A_1 = \text{Ind}_m(a_{m-2}=1,a_{m-1}=1,a_m=1)$ and $ A_2 = \text{Ind}_m(a_{m-2}=0,a_{m-1}=0,a_m=0)$.
\end{proof}

Now we are ready to prove Theorem \ref{thm2}.

\begin{theorem}[\textbf{Necessity}]\label{thm2}
Let $M$ be a block lower-triangular matrix with $s(M)=\langle s_1,...,s_l\rangle$ and $\pi= \varphi(M)$ is an automorphism of $C(\MI)$ with length $n=2^m$. $\pi$ commutes with $\text{SC}_{\MI}$ only if DecAut$(s(M), \MI)$ outputs TRUE.
\end{theorem}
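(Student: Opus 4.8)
The plan is to prove the contrapositive by induction on $l$, mirroring the recursive structure of Algorithm \ref{alg:1}: I will assume DecAut$(s(M),\MI)$ outputs FALSE and exhibit (or inductively obtain) a received LLR vector $y$ for which $\text{SC}_{\MI}(\pi(y))\neq\pi(\text{SC}_{\MI}(y))$. By Remark \ref{re1} I may assume $M$ is upper-triangular, since lower-triangular transformations preserve both the block structure $s(M)$ and the property of commuting with $\text{SC}_{\MI}$. The base case $l=1$ is the statement that if the code is not Rate-0, Rep, SPC, or Rate-1, then some permutation (hence some affine $\varphi(M)$ with $M$ an irreducible block) fails to commute with $\text{SC}_{\MI}$; this is the standard fact that SC $\neq$ ML exactly outside these four node types, so a suitable $y$ can be found (or cited from \cite{b14}) making the SC output permutation-dependent.

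For the induction step $l-1\to l$, split on $s_l$ exactly as the algorithm does. \emph{Case $s_l=1$.} Here Lemma \ref{lemma2}, equation (\ref{eq1}), tells us $\pi$ preserves the upper and lower half-branches and acts on each by $\tilde\pi_1=\varphi(M([1,m-1],[1,m-1]))$, which has block structure $\langle s_1,\dots,s_{l-1}\rangle$. DecAut returning FALSE means DecAut$(\langle s_1,\dots,s_{l-1}\rangle,\MI(A_1))$ or DecAut$(\langle s_1,\dots,s_{l-1}\rangle,\MI(A_2))$ returns FALSE, i.e.\ $\tilde\pi_1$ fails to commute with $\text{SC}_{\MI_1}$ or $\text{SC}_{\MI_0}$ (in the notation of Lemma \ref{lemma3}). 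Lemma \ref{lemma3} then immediately lifts this failure to $\pi$ not commuting with $\text{SC}_{\MI}$: feeding $+\infty$ (resp.\ a small positive $\varepsilon$) into the complementary half-branch transports a bad input on the $(m-1)$-length subcode to a bad input on $C(\MI)$. \emph{Case $s_l>1$.} Now DecAut returns FALSE if either (i) $\MF\not\subseteq A_1$ \emph{and} $\MI\not\subseteq A_2$, or (ii) the relevant branch condition holds but the corresponding recursive call DecAut$(\langle s_1,\dots,s_{l-1}\rangle,\MI(A_1))$ or $(\cdot,\MI(A_2))$ returns FALSE. Sub-case (ii) is handled just like the $s_l=1$ case: restrict $\pi$ to the block $A_1$ (resp.\ $A_2$), note it acts there as $\tilde\pi_1$, and use the by-now-available failure together with an $\pm\infty$/$\varepsilon$ padding argument on the remaining coordinates. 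Sub-case (i) is where Lemma \ref{lemma4} does the work: repeatedly applying the odd/even-bit decomposition used in its proof, one can peel off the first $S_l-1$ unit blocks and reduce to $s(M)=\langle 1,\dots,1,s_l\rangle$ with $s_l\in\{2,3\}$ — and here one must first argue that it suffices to treat $s_l=2$ and $s_l=3$, since a larger irreducible final block contains an irreducible $2\times 2$ or $3\times 3$ sub-pattern that can be isolated by a lower-triangular transformation (this is exactly why the block is declared irreducible). Lemma \ref{lemma4} then yields a counterexample whenever $\MF\not\subseteq A_1$ and $\MI\not\subseteq A_2$.

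The main obstacle I anticipate is sub-case (i): reducing a general irreducible final block of size $s_l>3$ to the $s_l\in\{2,3\}$ situation covered by Lemma \ref{lemma4}, and doing so while preserving the fact that the frozen/information-set obstruction $\MF\not\subseteq A_1$, $\MI\not\subseteq A_2$ survives the reduction. One has to be careful that the odd/even peeling in Lemma \ref{lemma4}'s proof keeps the code decreasing (it does, since restricting to $a_1=c$ preserves the decreasing property) and that at least one of the two resulting sub-information-sets still violates the required containment — the decreasing property is used precisely to rule out the pathological case where both halves become "good." A secondary technical point is bookkeeping the interaction between the algebraic decomposition $\pi=\pi_1\circ\pi_2$ of Lemma \ref{lemma2} and the analytic $\pm\infty$ trick of Lemma \ref{lemma3}/\ref{lemma4}: one must ensure the padded inputs actually realize the claimed LLRs at stage $S_l$ (the $f(\cdot,+\infty)=\mathrm{id}$ and $g(\cdot,\cdot,+\infty)=+\infty$ identities), so that the SC trellis genuinely decouples into the sub-blocks on which induction applies. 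Once these reductions are in place, the contrapositive follows by assembling the cases.
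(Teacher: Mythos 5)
Your overall shape (contrapositive, induction, case split on $s_l$, Lemmas \ref{lemma3} and \ref{lemma4} as the engines) matches the paper, but the step you flag as the "main obstacle" is a genuine gap, and the mechanism you propose for it cannot work. You want to handle an irreducible final block of size $s_l\geq 4$ by "isolating an irreducible $2\times 2$ or $3\times 3$ sub-pattern by a lower-triangular transformation" so that Lemma \ref{lemma4} applies. But Lemma \ref{lemma1} states that lower-triangular transformations \emph{preserve} $s(M)$: every $M_1MM_2$ still has final block size $s_l\geq 4$, so no such transformation ever lands you in the hypothesis of Lemma \ref{lemma4}, and there is no other tool in the paper that converts commutation of one automorphism into commutation of an unrelated one with a smaller block. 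The paper's actual route for $s_l\geq 4$ is different: it inducts on $m$ rather than on $l$, uses a lower-triangular transformation to arrange that deleting the last row and column of $M$ yields final block size $s_l-1$, applies Lemma \ref{lemma3} to pass to the two half-branch subcodes, invokes the inductive hypothesis there to get the containment conditions $\MF(A_i)\subseteq\cdots$ or $\MI(A_i)\subseteq\cdots$, and then closes with a purely combinatorial three-case argument (using the decreasing property and the structure of the automorphism group) showing these force $\MF\subseteq A_1$ or $\MI\subseteq A_2$ for the full code. Your proposal contains none of this.

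Two further points. First, even for $s_l\in\{2,3\}$ your sub-case (i) is incomplete: Lemma \ref{lemma4} only applies to structures $\langle 1,\dots,1,s_l\rangle$, whereas the earlier blocks $s_1,\dots,s_{l-1}$ are arbitrary. The paper bridges this by first showing (via repeated application of Lemma \ref{lemma3}, the inductive hypothesis on the $2^{s_l}$ subcodes, and then the \emph{sufficiency} Theorem \ref{thm1}) that $\pi_1$ commutes with $\text{SC}_{\MI}$, so that the failure is inherited by $\pi_2=\pi_1^{-1}\circ\pi$, which does have the structure $\langle 1,\dots,1,s_l\rangle$ required by Lemma \ref{lemma4}; your "peeling" does not supply this factorization. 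Second, your base case $l=1$ appeals to "SC $\neq$ ML outside the four node types," which shows the decoder is suboptimal but not that a specific $\varphi(M)$ is permutation-variant; the paper instead verifies $m\leq 3$ by exhaustive search and folds $l=1$, $m\geq 4$ into the $s_l\geq 4$ induction.
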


\begin{table*}[htbp] 
\begin{center}
\setlength{\tabcolsep}{1.5mm}{
\begin{tabular}{c|c|c|c|c|c}
\hline
$(n,K)$	& $\MI_{\text{min}}$& affine automorphism group	& $[\Ba]_{\MI}$	& SC-invariant permutations in \cite{b10} & SC-invariant permutations in this paper
 \\
\hline
$(256,128)$	& $\{31,57\}$ & BLTA$([3,5])$ &	BLTA$([3,1,1,1,1,1])$ &	$3\times 2^{28}$ & $21\times 2^{28}$\\

$(128,85)$ & $\{23,25\}$ & BLTA$([3,1,3])$ & BLTA$([3,1,1,1,1])$ & $3\times 2^{21}$ &	$21\times 2^{21}$\\

$(64,32)$ &	$\{24\}$ & BLTA$([3,3])$ & BLTA$([3,2,1])$ & $3\times 2^{15}$ &$63\times 2^{15}$\\
\hline
\end{tabular}}
\\

\caption{The number of SC-invariant permutations for certain codes}
\label{tab1}
\end{center}
\end{table*}

\begin{proof}
From Remark \ref{re1}, assume $M$ is an upper-triangular matrix. We prove the theorem by induction on $m$. if $m\leq 3$, it can be proved by computer search. Assume the theorem holds for $m'\leq m-1$. Define $A_1 = \text{Ind}_m(a_m=1)$ and $A_0 = \text{Ind}_m(a_m=0)$.  Cases are classified according to $s_l$.

If $s_l=1$, the theorem can be proved by Lemma \ref{lemma3}.

If $s_l=2$ or $3$, Let $\pi_1,\pi_2,\tilde{\pi}_1,\tilde{\pi}_2$ be the permutations defined in Definition \ref{def1}. Divide $[0,2^m-1]$ into $2^{s_l}$ blocks $\text{Ind}_m(a_{S_l+1}=c_{S_l+1},...,a_m=c_m)$. Denote $\MI'=\MI(\text{Ind}_m(a_{S_l+1}=c_{S_l+1},...,a_m=c_m))$. Since $\pi$ is SC-invariant, repeatedly applying Lemma \ref{lemma3} reveals that$\tilde{\pi}_1$ commutes with $\text{SC}_{\MI'}$. By Theorem \ref{thm1}, $\pi_1$ commutes with $\text{SC}_{\MI}$. Therefore, $\pi_2=\pi_1^{-1}\circ\pi$ commutes with $\text{SC}_{\MI}$. Then the theorem can be proved by Lemma \ref{lemma4}.

If $s_l\geq 4$, without loss of generality, assume $s(M([1,m-1],[1,m-1]))=\langle s_1,...,s_{l-1},s_l-1\rangle$ and $M(m,[1,m-1])=0$. Define $\pi' = \varphi(M([1,m-1],[1,m-1]))$. (This can be achieved by transformations in Lemma \ref{lemma1}.) From Lemma \ref{lemma3}, $\pi$ commutes with $\text{SC}_{\MI}$ only if $\pi'$ commutes with $\text{SC}_{\MI(A_i)}$ for $i=0,1$.

From inductive hypothesis, for all $i = 0,1$, one of $\MF(A_i)\subseteq  \text{Ind}_m(a_{S_l+1}=1,...,a_{m-1}=1,a_m=i)$ and $\MI(A_i)\subseteq \text{Ind}_m(a_{S_l+1}=0,...,a_{m-1}=0,a_m=i)$ holds. Now we are going to prove one of $\MF\subseteq \text{Ind}_m(a_{S_l+1}=1,...,a_{m-1}=1,a_m=1)$ and $\MI\subseteq \text{Ind}_m(a_{S_l+1}=0,...,a_{m-1}=0,a_m=0)$ must hold. We consider the following three cases:

1) If $\MI(A_1)=\varnothing$, then $A_1\subseteq \MF$. Since $\varphi(M)$ with $s(M)=\langle s_1,...,s_l\rangle$ is an automorphism of $C(\MI)$, for any permutations $\rho\in sym(m)$ that permutes $[S_j+1,S_{j+1}]$ to $[S_j+1,S_{j+1}]$ for $1\leq j\leq l$, $(a_1,...,a_m)\in\MI$ is equal to $(a_{\rho(1)},...,a_{\rho(m)})^T\in\MI$. Therefore, $A_1\subseteq \MF$ implies $[n/2,n-2^{S_l}-1]\subseteq \MF$. Thus, $\MI\subseteq \text{Ind}_m(a_{S_l+1}=0,...,a_{m-1}=0,a_m=0)$ must hold. 

2) If $\MF(A_0) = \varnothing$, similarly, $\MF\subseteq \text{Ind}_m(a_{S_l+1}=1,...,a_{m-1}=1,a_m=1)$ must hold.

3) If $\MI(A_1)\neq\varnothing$ and $\MF(A_0) \neq\varnothing$. By properties of affine automorphism group, $\MI(A_1)\neq\varnothing$ implies $\MI(A_0)\not\subseteq \text{Ind}_m(a_{S_l+1}=0,...,a_{m-1}=0,a_m=0)$. Thus 
\[\MF(A_0)\subseteq \text{Ind}_m(a_{S_l+1}=1,...,a_{m-1}=1,a_m=0). \]
Similarly, 
\[\MI(A_1)\subseteq \text{Ind}_m(a_{S_l+1}=0,...,a_{m-1}=0,a_m=1).\] 
Then $\text{Ind}_m(a_{m-3}=0, a_{m-2}=1, a_{m-1}=1, a_m=0)\subseteq \MI$ and $\text{Ind}_m(a_{m-3}=1, a_{m-2}=0, a_{m-1}=0, a_m=1)\subseteq\MF$, which is contradictory against automorphism group. 

\end{proof}

From Algorithm \ref{alg:1}, SC-invariance of $\varphi(M)$ only depends on the block lower-triangular structure of $M$. Thus, we can prove the following theorem.

\begin{theorem}\label{thm3}
$[\Ba]_{\MI}$ is in the form of BLTA. 
\end{theorem}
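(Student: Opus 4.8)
The plan is to read Theorem~\ref{thm3} off the group structure of $[\Ba]_{\MI}$, using that over $\FF_2$ the $\mathrm{LTA}$ group is a Borel subgroup. Since $(M,0)\in[\Ba]_{\MI}$ iff $(M,b)\in[\Ba]_{\MI}$ for every shift $b$, it suffices to understand the group $L\subseteq GL_m(\FF_2)$ of linear parts of the automorphisms in $[\Ba]_{\MI}$. Three facts are already available: $[\Ba]_{\MI}$ is a subgroup of the affine automorphism group, which equals $\mathrm{BLTA}([t_1,\dots,t_r])$ for a composition $[t_1,\dots,t_r]$ of $m$, so $L\subseteq\mathrm{BLTA}([t_1,\dots,t_r])$; the $\mathrm{LTA}$ group is SC-invariant, so $L\supseteq\mathrm{LTA}=\mathrm{BLTA}([1,\dots,1])$; and over $\FF_2$ every invertible lower-triangular matrix has unit diagonal, so $\mathrm{LTA}$ is exactly the lower Borel subgroup $B^{-}$ of $GL_m(\FF_2)$.

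The key step is to invoke the classification of the overgroups of $B^{-}$ in $GL_m(\FF_2)$: they are precisely the standard block-lower-triangular parabolic subgroups, i.e. the groups $\mathrm{BLTA}([s_1,\dots,s_k])$ as $[s_1,\dots,s_k]$ ranges over compositions of $m$, with $\mathrm{BLTA}([p])\subseteq\mathrm{BLTA}([q])$ exactly when $[p]$ refines $[q]$. Applying this to $L$ gives $L=\mathrm{BLTA}([s_1,\dots,s_k])$ for some composition of $m$; combining with $L\subseteq\mathrm{BLTA}([t_1,\dots,t_r])$ forces $[s_1,\dots,s_k]$ to refine $[t_1,\dots,t_r]$. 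Re-attaching arbitrary shift vectors yields $[\Ba]_{\MI}=\mathrm{BLTA}([s_1,\dots,s_k])$, which is the asserted form.

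If a derivation closer to the rest of Section~III is preferred, the appeal to the classification can be replaced as follows. By Theorems~\ref{thm1} and~\ref{thm2}, membership of $\varphi(M)$ in $[\Ba]_{\MI}$ depends only on $s(M)$, so $L$ is a union of ``cells'' $\{M:s(M)=\sigma\}$ over the set $\Sigma$ of structures with $\mathrm{DecAut}(\sigma,\MI)=\mathrm{TRUE}$. Then one shows (i) $\sigma\in\Sigma$ implies $\mathrm{BLTA}([\sigma])\subseteq[\Ba]_{\MI}$: since $[\Ba]_{\MI}$ is a group containing $\mathrm{LTA}$, it is enough that $\langle\varphi(M),\mathrm{LTA}\rangle=\mathrm{BLTA}([s(M)])$, which follows by writing $M=ND$ with $N$ strictly block-lower-triangular with identity diagonal blocks (so $N\in\mathrm{LTA}$) and $D$ block-diagonal with respect to $s(M)$, and noting that each indecomposable diagonal block $B_{i,i}$ together with the lower-unitriangular matrices supported on its block generates the full general linear group of that block (this is exactly where ``$B_{i,i}$ does not decompose further'' enters); from (i), $\Sigma$ is closed under refinement. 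And (ii) $\sigma_1,\sigma_2\in\Sigma$ implies $\langle\mathrm{BLTA}([\sigma_1]),\mathrm{BLTA}([\sigma_2])\rangle=\mathrm{BLTA}([\sigma_1\vee\sigma_2])\subseteq[\Ba]_{\MI}$, so $\Sigma$ is closed under joins; hence $\Sigma$ has a greatest element $\sigma^{\ast}=\bigvee_{\sigma\in\Sigma}\sigma$, and $[\Ba]_{\MI}=\mathrm{BLTA}([\sigma^{\ast}])$.

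I expect the only genuine obstacle, common to both routes, to be the group-generation fact: a subgroup of $GL_m(\FF_2)$ containing the lower-unitriangular group is a standard block-triangular parabolic (equivalently, the two generation identities in the second route). This is classical, but a fully self-contained proof needs a short induction on $m$ in which a nonzero off-diagonal entry of some element of the subgroup is parlayed, through products with lower-unitriangular matrices, into an entire general-linear block; once that is settled, the remaining lattice bookkeeping — refinement, joins, and the inclusion $L\subseteq\mathrm{BLTA}([t_1,\dots,t_r])$ — is routine.
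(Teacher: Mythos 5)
Your proposal is correct, and your first route is genuinely different from the paper's. The paper proves Theorem~\ref{thm3} by leaning on Theorems~\ref{thm1} and~\ref{thm2} (SC-invariance of $\varphi(M)$ depends only on $s(M)$), taking the valid block structure with the fewest blocks, and then deriving a contradiction: if some $\mathrm{BLTA}([s'_1,\dots,s'_k])\subseteq[\Ba]_{\MI}$ were not contained in the chosen $\mathrm{BLTA}([s_1,\dots,s_l])$, a transposition matrix from the primed group crossing a block boundary of $M$ could be multiplied in to merge two blocks, contradicting minimality of $l$. Your first route bypasses the decoder analysis entirely: since $[\Ba]_{\MI}$ is a group squeezed between $\mathrm{LTA}$ and the affine automorphism group, its linear part is an overgroup of the lower Borel in $GL_m(\FF_2)$ and hence a standard parabolic, i.e.\ a BLTA group. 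This is shorter and more conceptual, and it shows the conclusion holds for \emph{any} SC-invariant subgroup containing $\mathrm{LTA}$, independent of Algorithm~\ref{alg:1}; the price is importing the classical parabolic classification, whose self-contained proof (the induction you sketch, turning one off-pattern entry into a full $GL$ block via products with unitriangular matrices) is essentially the same combinatorial work that the paper's explicit block-merging computation $s(M_1M)=\langle s_1,\dots,s_{j-1}+s_j,\dots,s_l\rangle$ performs in a single concrete instance. Your second route is a cleaner reorganization of the paper's own idea: the paper implicitly uses both of your closure properties (closure of the valid-structure set under refinement, to get $\mathrm{BLTA}([s_1,\dots,s_l])\subseteq[\Ba]_{\MI}$ from a single valid $M$, and closure under joins, via the merging step), but argues by minimality and contradiction rather than by exhibiting the greatest element of the lattice directly. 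You correctly identify the generation fact as the one point that is nontrivial in either formulation.
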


\begin{proof}
Let $M$ be a block lower-triangular matrix with $s(M)=\langle s_1,...,s_l\rangle$ satisfying $\pi= \varphi(M)$ commutes with $\text{SC}_{\MI}$ and $l$ is as small as possible. Then BLTA$([s_1,...,s_l])\subseteq [\Ba]_{\MI}$. Now assume BLTA$([s'_1,...,s'_k])\subset [\Ba]_{\MI}$ but BLTA$([s'_1,...,s'_k])\not\subset$ BLTA$([s_1,...,s_l])$. Then there must exist some $i,j$ such that $S'_i<S_j<S'_{i+1}$. Let $M_1$ be a permutation matrix which permutes $S'_i$ and $S_j$ and keeps the other positions invariable, then $\varphi(M_1)\in$ BLTA$([s'_1,...,s'_k])$. However, $s(M_1M) = \langle s_1,...,s_{j-2},s_{j-1}+s_j,s_{j+1},...,s_l\rangle$ and $\varphi(M_1M)\in [\Ba]_{\MI}$, which is contradictory against that $l$ is as small as possible.
\end{proof}

In Algorithm \ref{alg:1}, we determine whether an affine automorphism commutes with $\text{SC}_{\MI}$. With Algorithm \ref{alg:2}, we further determine the complete SC-invariant affine automorphism group $[\Ba]_{\MI}=$ BLTA(DecGroup$(\MI,m))$.

\begin{figure}[!t]
\begin{algorithm}[H]
\caption{DecGroup$(\MI,m)$}
\begin{algorithmic}[1]\label{alg:2}

\renewcommand{\algorithmicrequire}{\textbf{Input:}}
\renewcommand{\algorithmicensure}{\textbf{Output:}}
\REQUIRE the information sets $\MI$, the code dimension $m$.
\ENSURE $s=[s_1,...,s_l]$; \# BLTA$([s_1,...,s_l])=[\Ba]_{\MI}$
\STATE  $\MF\gets \{0,...,2^m-1\}/\MI$;  
\IF{$m=0$}
  \STATE $s=[]$;
  \STATE return;
\ENDIF

\FOR{$t=m; t \geq 2; t--$}
  \STATE $A_1\gets \text{Ind}_m(a_{m-t+1}=1,...,a_m=1)$;
  \STATE $A_2\gets \text{Ind}_m(a_{m-t+1}=0,...,a_m=0)$;
  \IF{$\MF\subseteq A_1$}
    \STATE $s \gets [\text{DecGroup}(\MI(A_1),m-t),t]$;
    \STATE return;
  \ELSE
  \IF{$\MI\subseteq A_2$} 
      \STATE $s \gets [\text{DecGroup}(\MI(A_2),m-t),t]$;
      \STATE return;
    \ENDIF
  \ENDIF
\ENDFOR
\STATE $s'\gets[\text{DecGroup}(\MI(\text{Ind}_m(a_m=1),m-1),1]$;
\STATE $s''\gets[\text{DecGroup}(\MI(\text{Ind}_m(a_m=0),m-1),1]$;
\STATE $s \gets \text{Gro}(s',s'')$; \# BLTA$(s)$ =  BLTA$(s')\cap$ BLTA$(s'')$.
\end{algorithmic}
\end{algorithm}
\end{figure}

Without loss of generalization, assume $[\Ba]_{\MI} = \text{BLTA}([s_1,...,s_l])$. We first determine $s_l$, then $[s_1,...,s_{l-1}]$ can be obtained by calling the algorithm recursively. 

First, $s_l$ is determined by the loop in line 6. For $2 \leq t \leq m$, divide $[0,2^m-1]$ to $2^t$ blocks. We have $s_l = t$ if and only if $t$ is the largest integer such that all frozen bits belong to the first block $A_1=[0,2^{m-t}-1]$ (line 9) or all information bits belong to the last block $A_2=[2^m-2^{m-t},2^m-1]$ (line 13). If for all $2 \leq t \leq m$, the above conditions are not satisfied, we have $s_l=1$. 

If $s_l\geq 2$, $[s_1,...,s_{l-1}]$ can be recursively obtained by calling the algorithm with $\MI(A_1)$ (line 10) when $\MF\subseteq A_1$ or $\MI(A_2)$ (line 14) when $\MI\subseteq A_2$. If $s_l=1$, BLTA$[s_1,...,s_{l-1}]$ is the intersection of the SC-invariant affine automorphism groups of subcodes on the upper and lower half branches (lines 19-21). In line 21, Gro$(s',s'')$ output the array $s$ satisfying BLTA$(s)$ =  BLTA$(s')\cap$ BLTA$(s'')$. Such $s$ exists and can be found by the following lemma.

\begin{lemma}\label{lemma5}
The intersection of two BLTA groups is in the form of BLTA. 
\end{lemma}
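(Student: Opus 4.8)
\textbf{Proof proposal for Lemma \ref{lemma5}.}

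The plan is to show that if $G_1 = \text{BLTA}([s'_1,\dots,s'_p])$ and $G_2 = \text{BLTA}([s''_1,\dots,s''_q])$, then $G_1\cap G_2$ is again a BLTA group, by identifying the right block structure explicitly from the two partitions. Each BLTA group is characterized by a chain of ``breakpoints'': writing $T'_i = \sum_{j\le i} s'_j$ and $T''_i = \sum_{j\le i} s''_j$, the group $G_1$ consists of exactly those invertible $M$ with $M([1,T'_i],[T'_i+1,m]) = 0$ for every $i$, and similarly for $G_2$ with the $T''_i$. So the first step is to observe that membership in a BLTA group is equivalent to a family of ``zero-block'' constraints indexed by a set of breakpoints $\mathcal{B}\subseteq\{1,\dots,m-1\}$, and that imposing the constraints for breakpoint set $\mathcal{B}_1$ and breakpoint set $\mathcal{B}_2$ simultaneously is the same as imposing them for $\mathcal{B}_1\cup\mathcal{B}_2$.

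The second step is to check that the zero-block constraints for an arbitrary breakpoint set $\mathcal{B} = \{b_1 < b_2 < \dots < b_r\}$ cut out exactly the BLTA group with block sizes $[b_1,\, b_2-b_1,\, \dots,\, m-b_r]$. That a block lower-triangular matrix of that shape satisfies the constraints is immediate; conversely, if $M([1,b_k],[b_k+1,m]) = 0$ for all $k$, then partitioning the index set at $b_1,\dots,b_r$ puts $M$ in block lower-triangular form with those block sizes, and each diagonal block is automatically full rank since $M$ is invertible and the matrix is block triangular. Hence $G_1\cap G_2 = \text{BLTA}(s)$ where $s$ is the block-size vector associated to $\mathcal{B}_1\cup\mathcal{B}_2$, and this is precisely the output of $\text{Gro}(s',s'')$.

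One subtlety worth addressing explicitly is that the definition of $s(M) = \langle s_1,\dots,s_l\rangle$ in the paper requires each diagonal block $B_{i,i}$ to be \emph{indecomposable} (not itself block lower-triangular with more than one block), whereas $\text{BLTA}([s_1,\dots,s_l])$ as a \emph{group} does not impose this — e.g. $\text{BLTA}([2]) \supsetneq \text{BLTA}([1,1])$, both being legitimate groups. For the intersection statement we only need the group-level description, so I would state at the outset that $\text{BLTA}([s_1,\dots,s_l])$ denotes the group of all $(M,b)$ with $M$ of the form \eqref{eqma} (diagonal blocks merely full-rank), and the breakpoint argument goes through verbatim. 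I do not expect a serious obstacle here; the only care needed is to keep the group-theoretic ``all such $M$'' reading separate from the canonical-structure reading of $s(\cdot)$, and to note that the union of two breakpoint sets is still a valid breakpoint set, so the resulting block-size vector is well-defined and has positive entries.
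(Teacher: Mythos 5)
Your proposal is correct and follows essentially the same route as the paper's proof: both form the union of the two breakpoint (partial-sum) sets, take the induced block sizes, and verify the two inclusions via the zero-block constraints. Your extra remarks — that invertibility of $M$ forces the diagonal blocks to be full rank, and that the group $\text{BLTA}([s_1,\dots,s_l])$ must be read without the indecomposability requirement built into $s(\cdot)$ — are worthwhile clarifications the paper leaves implicit, but they do not change the argument.
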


\begin{proof}
We are going to find the BLTA group which is the intersection of BLTA$(s')$ and BLTA$(s'')$. Let $\{S_t\} = \{S_t'\}\cup \{S_t''\}$, and $s$ is induced by $\{S_t\}$, that is, $s_t =S_{t+1}-S_t$. Next we are going to prove BLTA$(s)$ =  BLTA$(s')\cap$ BLTA$(s'')$.

It is clear that BLTA$(s)\subseteq$ BLTA$(s')$ and BLTA$(s)\subseteq$ BLTA$(s'')$. Therefore, we only need to prove  BLTA$(s')\cap$ BLTA$(s'')\subseteq$ BLTA$(s)$. Assume $(M,b)\in\text{BLTA}(s')\cap\text{BLTA}(s'')$. Now we consider $M(j,k)$ for $S_i+1\leq j\leq S_{i+1}$ and $S_{i+1}+1\leq k\leq m$. Without loss of generality, assume $S_i=S_t'$, By the construction of $\{S_t\}$, we have $S_{i+1}\leq S_{t+1}'$. Then $M(j,k)=0$ since $(M,b)\in\text{BLTA}(s')$. Therefore, $(M,b)\in$ BLTA$(s)$.
\end{proof}

\begin{remark}
Algorithm \ref{alg:2} selects each $s_i$ as its largest possible value such that Algorithm \ref{alg:1} will not output FALSE, so it will output the complete SC-invariant affine automorphism group. Otherwise, if there exists another SC-invariant automorphism not in the output group, from Theorem \ref{thm3}, there will be a larger SC-invariant BLTA automorphism group with some larger $s_i$, which is a contradiction. Since the time complexity of one iteration is $O(m)$, the complexity of Algorithm \ref{alg:2} is $O(m2^m)=O(n\log n)$.
\end{remark}

\begin{example}
We now determine the complete SC-invariant affine automorphism group of $C(\MI)$ in Example \ref{ex1} by Algorithm \ref{alg:2}. For all $2\leq t\leq 4$, the conditions in line 9 and line 13 are not satisfied, so the last number of $s$ is 1. Then we call DecGroup$(\{3,5,6,7\},3)$ and DecGroup$(\{1,2,3,4,5,6,7\},3)$. DecGroup$ (\{3,5,6,7\},3)$ will output $[2,1]$ and DecGroup$(\{1,2,3,4,5,6,7\},3)$ will output $[3]$. Then $s=  \text{Gro}([2,1,1],[3,1]) = [2,1,1]$. Therefore, the complete SC-invariant affine automorphism group of $C(\MI)$ is BLTA$([2,1,1])$.
\end{example}

\section{Simulation}

\begin{figure}[!t]
\centering
\includegraphics[width=3.2in]{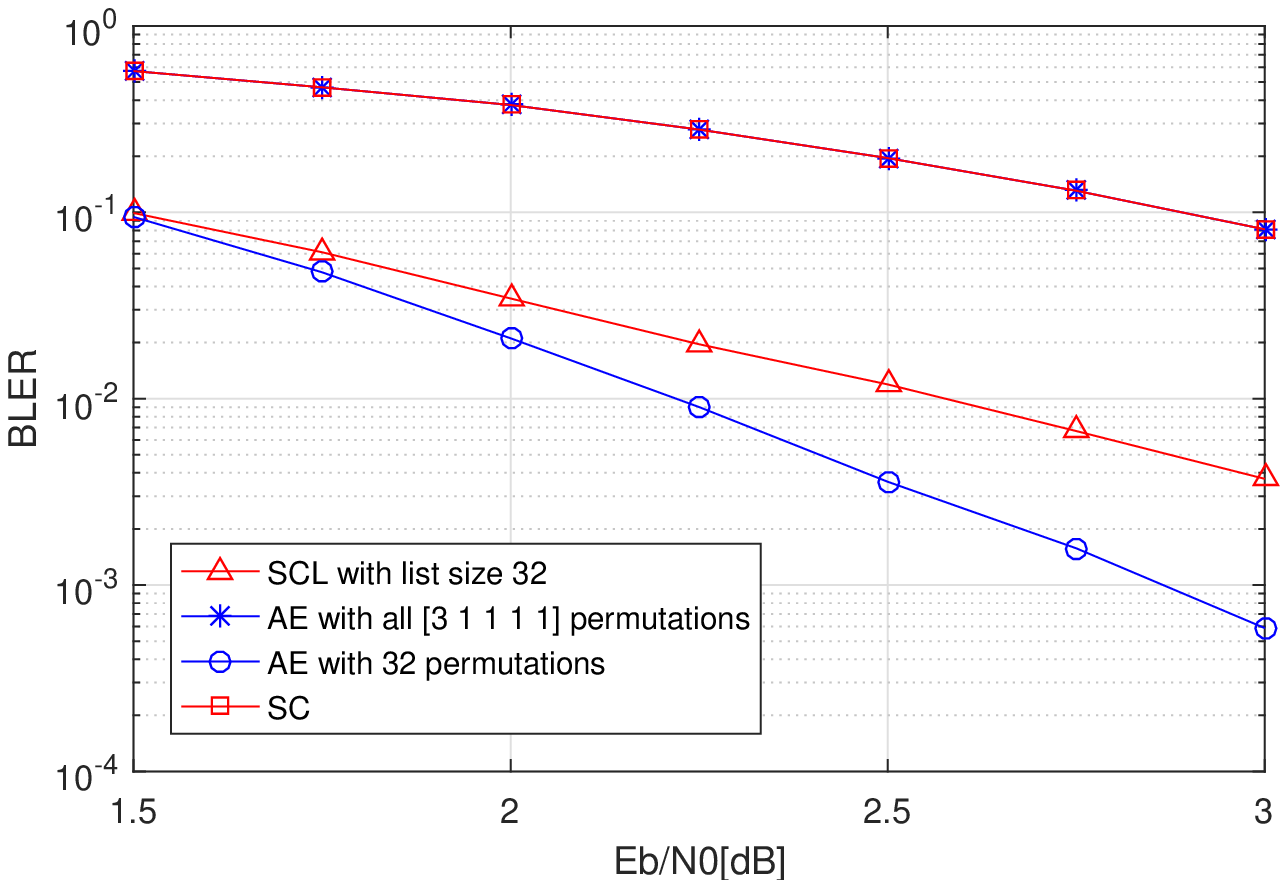}
\caption{Performance of a (256,128) polar code}
\label{fig1}
\end{figure}

Fig. \ref{fig1} shows the block error rate (BLER) performance of the (256,128) polar code studied in \cite{b7} and \cite{b10}. The code is generated by $\MI_{\text{min}} = \{31,57\}$ and has affine automorphism group \text{BLTA}$([3,5])$. In this case, $[\Ba]_{\MI} = \text{BLTA}([3,1,1,1,1,1])$, and it is shown that all the automorphisms in BLTA$([3,1,1,1,1,1])$ are futile in AE-SC decoding. Since the complete SC-invariant affine automorphisms are determined, the number of equivalent classes can be reduced from 68355 \cite{b10} to 9765. 

Table \ref{tab1} compares the number of SC-invariant affine automorphisms found in this paper with BLTA$([2,1...,1])$. Under several code constructions, the SC-invariant automorphism group can be larger than BLTA$([2,1...,1])$, which benefits applications requiring SC-invariant automorphisms. 

\section{Conclusion}
In this paper, we determine and prove the complete SC-invariant affine automorphisms for any specific decreasing polar code, which form a BLTA group. Compared to previous works, more SC-invariant affine automorphisms can be found according to our results. It helps us remove redundant permutations in AE-SC decoding and contributes to other applications requiring SC-invariant automorphisms.


\begin{thebibliography}{00}
\bibitem{b1} E. Arikan, ``Channel Polarization: A Method for Constructing Capacity-Achieving Codes for Symmetric Binary-Input Memoryless Channels," in \emph{IEEE Transactions on Information Theory}, vol. 55, no. 7, pp. 3051-3073, July 2009.
\bibitem{b2} I. Tal and A. Vardy, ``List Decoding of Polar Codes," in \emph{IEEE Transactions on Information Theory}, vol. 61, no. 5, pp. 2213-2226, May 2015.
\bibitem{b3} K. Niu and K. Chen, ``CRC-Aided Decoding of Polar Codes," in \emph{IEEE Communications Letters}, vol. 16, no. 10, pp. 1668-1671, October 2012.
\bibitem{b4} N. Doan, S. A. Hashemi, M. Mondelli and W. J. Gross, ``On the Decoding of Polar Codes on Permuted Factor Graphs," 2018  \emph{IEEE Global Communications Conference (GLOBECOM)}, 2018, pp. 1-6.
\bibitem{b5} M. Geiselhart, A. Elkelesh, M. Ebada, S. Cammerer and S. t. Brink, ``Automorphism Ensemble Decoding of Reed–Muller Codes," in \emph{IEEE Transactions on Communications}, vol. 69, no. 10, pp. 6424-6438, Oct. 2021.
\bibitem{b6} M. Bardet, V. Dragoi, A. Otmani and J. -P. Tillich, ``Algebraic properties of polar codes from a new polynomial formalism," 2016  \emph{IEEE International Symposium on Information Theory (ISIT)}, 2016, pp. 230-234.
\bibitem{b7} M. Geiselhart, A. Elkelesh, M. Ebada, S. Cammerer and S. ten Brink, ``On the Automorphism Group of Polar Codes," 2021  \emph{IEEE International Symposium on Information Theory (ISIT)}, 2021, pp. 1230-1235.
\bibitem{b8} Y. Li, H. Zhang, R. Li, J. Wang, W. Tong, G. Yan, Z. Ma, (2021). The Complete Affine Automorphism Group of Polar Codes. arXiv preprint arXiv:2103.14215.
\bibitem{b9} C. Pillet, V. Bioglio and I. Land, ``Polar Codes for Automorphism Ensemble Decoding,"  \emph{2021 IEEE Information Theory Workshop (ITW)}, 2021, pp. 1-6.
\bibitem{b10} C. Pillet, V. Bioglio and I. Land,  (2021). Classification of Automorphisms for the Decoding of Polar Codes. arXiv preprint arXiv:2110.14438.
\bibitem{b11} H. Luo et al., ``Analysis and Application of Permuted Polar Codes," \emph{IEEE Global Communications Conference (GLOBECOM)}, Abu Dhabi,
United Arab Emirates, 2018, pp. 1-5.
\bibitem{b13} C. Schürch, ``A partial order for the synthesized channels of a polar code," \emph{2016 IEEE International Symposium on Information Theory (ISIT)}, 2016, pp. 220-224.
\bibitem{b14} G. Sarkis, P. Giard, A. Vardy, C. Thibeault and W. J. Gross, ``Fast Polar Decoders: Algorithm and Implementation," in \emph{IEEE Journal on Selected Areas in Communications}, vol. 32, no. 5, pp. 946-957, May 2014.

\end{thebibliography}
\end{document}